\providecommand{\U}[1]{\protect\rule{.1in}{.1in}}
\renewcommand{\inst}[1]{\ensuremath{^{#1}}}
  \renewcommand{\todo}[2][1]{}
  \newcommand{\soutred}[1]{\textcolor{red}{\sout{#1}}}
  \newcommand{\soutred}[1]{}
  \newcommand{\newsout}[1]{\textcolor{orange}{\sout{#1}}}
  \newcommand{\newsout}[1]{}
\title{A Framework for Combined Transaction Posting and Pricing for Layer 2 Blockchains}
\author{Shouqiao Wang\inst{1} \and Davide Crapis\inst{2} \and Ciamac C. Moallemi\inst{1}}
\authorrunning{Shouqiao et al.}
\institute{
  \inst{1}Columbia University \\
  \inst{2}Ethereum Foundation
}
\begin{document}
\maketitle

\begin{abstract}
  This paper presents a comprehensive framework for transaction posting and pricing in Layer 2 (L2) blockchain systems, focusing on challenges stemming from fluctuating Layer 1 (L1) gas fees and the congestion issues within L2 networks. Existing methods have focused on the problem of optimal posting strategies to L1 in isolation, without simultaneously considering the L2 fee mechanism.  In contrast, our work offers a unified approach that addresses the complex interplay between transaction queue dynamics, L1 cost variability, and user responses to L2 fees. We contribute by (1) formulating a dynamic model that integrates both posting and pricing strategies, capturing the interplay between L1 gas price fluctuations and L2 queue management, (2) deriving an optimal threshold-based posting policy that guides L2 sequencers in managing transactions based on queue length and current L1 conditions, and (3) establishing theoretical foundations for a dynamic L2 fee mechanism that balances cost recovery with congestion control. We validate our framework through simulations.
\end{abstract}


\section{Introduction}

The scalability roadmap for Ethereum is rapidly coming to fruition, marked by the deployment of numerous Layer 2 (L2) blockchains that have significantly enhanced the network's capacity and efficiency. Major L2 solutions like Optimism, Arbitrum, and others have attracted assets worth billions of dollars and are processing millions of transactions daily. These platforms offload transaction processing from the main Ethereum blockchain (L1), enabling higher throughput and lower fees for users while maintaining the security guarantees of the underlying network.

A pivotal development facilitating this growth is Ethereum Improvement Proposal 4844 (EIP-4844), which has unlocked ample data availability for L2 rollups. By introducing a new transaction type that carries ephemeral data (blobs), EIP-4844 allows L2 blockchains to post large amounts of data to L1 at significantly reduced costs. This enhancement has resulted in very low fees for L2 users, further incentivizing adoption and driving transaction volumes.

However, the burgeoning L2 ecosystem faces two critical challenges:

\begin{enumerate} \item \textbf{Fluctuating L1 Fees and Posting Costs:} The cost of posting transactions from L2 to L1 is subject to significant variability due to fluctuating L1 gas prices. These fluctuations impact the operational costs of rollups, making it challenging to predict and manage expenses effectively. This issue is poised to become even more pronounced as data availability demand begins to match supply, potentially increasing the cost and variability of data blobs used in rollup operations.

\item \textbf{Limited Capacity and Congestion Pricing:} Despite their enhanced capacity, L2 rollups have finite resources and may experience congestion during periods of high demand. To manage this, they might need to implement congestion fees, adjusting transaction costs to regulate network usage and maintain performance standards.
\end{enumerate}

Addressing these challenges is essential for the sustainable growth of L2 solutions. Current studies have predominantly focused on the problem of optimal posting strategies to L1 in isolation, without concurrently considering the L2 fee mechanism. Practical systems have employed heuristic approaches to set L2 fees. Notably, Arbitrum Nitro has introduced a decoupled fee structure that separates fees into components aimed at recovering L1 posting costs and managing L2 congestion, drawing inspiration from Ethereum's EIP-1559 mechanism. However, these methods lack rigorous theoretical underpinnings, leaving a gap in understanding the optimal interplay between posting strategies and pricing mechanisms.

In this paper, we develop a comprehensive transaction management framework for L2 blockchain systems that captures the complex interplay between transaction queue dynamics, posting strategies, and gas price fluctuations. For the first time, we jointly consider the problems of posting and pricing, providing a unified approach to optimize both aspects simultaneously. Our model operates in discrete time, aligned with L1 block intervals, and incorporates realistic behaviors of both transaction arrivals influenced by L2 fees and L1 gas price movements.

We explicitly set up the objectives of balancing operational costs and controlling congestion within the L2 network. By formulating an objective function that incorporates both posting costs and queuing delays, we aim to derive strategies that optimize the cumulative operational cost over time while maintaining network performance. Furthermore, we establish the first theoretical results for L2 pricing mechanisms under this framework, subject to technical assumptions that we validate through robust simulations.

Our primary contributions are as follows:

\begin{enumerate} \item \textbf{Joint Modeling of Posting and Pricing Strategies:} We introduce a dynamic model that simultaneously considers the optimal posting of transactions from L2 to L1 and the L2 fee mechanism. This joint approach captures the dependencies between transaction queue dynamics, L1 gas price fluctuations, and user behavior in response to L2 fees.

\item \textbf{Optimal Posting Strategy with Threshold Policy:} We derive an optimal posting strategy for the L2 sequencer using dynamic programming techniques. We prove that a threshold policy is optimal, where the sequencer decides to post all pending transactions or none based on a critical queue length that depends on the current L1 gas price. This finding simplifies the decision-making process and provides practical guidelines for L2 operators.

\item \textbf{Theoretical Foundations for L2 Pricing Mechanisms:} We establish the first theoretical results for L2 pricing, explicitly formulating the objectives of cost recovery and congestion control. By modeling the transaction arrival rate as a function of the L2 fee, we develop a dynamic fee adjustment strategy that ensures budget balance and manages network congestion. We prove the existence and uniqueness of optimal fees and show that with our fee adjustment mechanism, fees converge to the optimal fees, under some technical conditions. We also demonstrate the robustness of our approach in more general and practical settings through simulations.
\end{enumerate}

The remainder of the paper is structured as follows. In Section~2, we detail our transaction management model, including the dynamics of the transaction queue, L1 gas price modeling, cost structure, compensation mechanism, and the formulation of the objective function. Section~3 is dedicated to deriving the optimal posting strategy. We employ dynamic programming techniques to establish the Bellman equation for our system and prove that a threshold policy is optimal for the sequencer's decision-making process. In Section~4, we focus on the design of the L2 fee mechanism. We analyze how the L2 fee influences transaction arrival rates and develop a dynamic fee adjustment strategy that achieves budget balance and congestion control. We establish the first theoretical results for L2 pricing under this framework and present an adaptive algorithm for fee updates. Section~5 provides an analysis of the proposed mechanisms, including the assumptions made, convergence results, and discussions on practical implications. We validate our theoretical findings through simulations, demonstrating the robustness of our approach under various realistic network conditions.


\subsection{Literature Review}


Existing literature on decentralized L1 blockchains such as Bitcoin and Ethereum has extensively examined fee market designs, often using game-theoretic models to investigate how auction-based or monopolistic pricing schemes affect miner incentives, user behavior, and network throughput \cite{basu2019towards,lavi2022redesigning,yaish2023correct,yao2018incentive}. Roughgarden \cite{roughgarden2020transaction} provides an economic analysis of Ethereum’s EIP-1559, highlighting dynamic base-fee adjustments for congestion control. Further refinements in L1 fee mechanisms appear in works by Leonardos et al. \cite{opt_chaos}, who analyzed fee market dynamics and demonstrated that optimal fee mechanisms can be achieved despite inherent market chaos, and Crapis et al. \cite{multidimensional}, who investigated optimal dynamic fees for blockchain resources, introducing models that adjust to network conditions to optimize fee structures. Additionally, Crapis \cite{crapis2023eip4844} has provided an analysis of the fee market under EIP-4844, discussing its implications for L1 fee dynamics. While these studies primarily focus on decentralized settings where miners or validators collectively enforce protocol rules, our work centers on a more centralized L2 context, where a single operator unilaterally defines both posting and pricing policies, thereby departing from the standard L1-centric paradigm. Although congestion and blockspace constraints remain central concerns, L2 rollups face the additional challenge of covering volatile posting costs on L1, which substantially shapes the L2 fee design. By jointly modeling L2 queue dynamic and L1 gas prices dynamics, our framework emphasizes maximizing throughput while achieving budget balance on the L2, thus bridging a key gap in the existing literature.

Studies on optimal posting strategies for L2 rollups have emerged to tackle the challenges posed by unpredictable L1 gas prices and the imperative for timely data finalization. Mamageishvili and Felten \cite{eff_batch_posting} propose a Q-learning approach to determine the optimal moment for a rollup to publish transactions on L1, modeling a trade-off between waiting for favorable gas prices and incurring delay costs under a quadratic delay cost model. Although their work offers valuable insights into the batch posting decision, it does not address the design of L2 fee mechanisms or account for the dynamics of user demand driven by network congestion. In contrast, our work considers user demand dynamics via an L2 fee model while adopting a linear delay cost framework. By rigorously proving a threshold policy property, we implement a policy iteration algorithm that is significantly more efficient than the Q-learning method. Similarly, Bar-On and Mansour \cite{bar2023optimal} build on related ideas by offering threshold-based policies for specific classes of cost functions, thereby providing analytical insights into optimal posting schedules. However, their analysis is confined solely to the posting decision and does not examine how user fees might be dynamically set to influence transaction arrivals. In contrast, our framework jointly addresses both posting and pricing decisions, capturing the intricate interplay between volatile L1 gas costs and user-driven congestion. Moreover, we design and rigorously prove key properties of our dynamic L2 fee mechanism under relaxed assumptions. Furthermore, Crapis et al. \cite{crapis2023eip} investigate the economics of EIP-4844, focusing on blob posting strategies and equilibrium cost-sharing among rollups. While their analysis highlights important trade-offs for rollups operating under blob constraints, our work focuses specifically on batch posting. Additionally, we incorporate a dynamic L2 fee mechanism to actively manage network congestion. By unifying the optimal posting strategy and L2 fee mechanism, our approach enables L2 systems to simultaneously maximize throughput and maintain budget balance, thereby offering a comprehensive and robust solution that extends beyond the scope of the aforementioned works.


\section{Model} \label{model}

In what follows, we describe the main components of our transaction management framework for a Layer 2 blockchain system. This framework not only captures the dynamics of the transaction queue and gas prices but also outlines a decision model for optimizing transaction posting costs and waiting costs over time. We consider a realistic transaction processing scenario where both the demand for transaction posting and fluctuating gas prices are modeled in discrete time, indexed by $t$.

\medskip
\noindent\textbf{Queue Dynamics.} The L2 queue length at time $t$, denoted by $Q_{t}$, evolves according to the following equation:
\[
Q_{t+1} = Q_{t} - S_{t} + A_{t},
\]
where $S_{t}$ is the number of transactions posted in the $t_\text{th}$ L1 block, and $A_{t}$ represents the number of incoming transactions between the $t_\text{th}$ and $(t+1)_\text{th}$ L1 block time. While $A_{t}$ is modeled as an independent and identically distributed process conditioned on a specific Layer 2 gas price $g$, the distribution of $A_{t}$ can vary as $g$ changes, which captures the dynamic interplay between the transactions processed and new ones arriving.

\medskip
\noindent\textbf{L1 Gas Price Dynamics.} The L1 gas price $P_{t}$, which is exogenous to the L2 structures, follows a mean-reverting process, i.e., it tends to move back toward a long-term average over time, with dynamics given by
\[
P_{t+1} = \theta \mu + (1 - \theta) P_{t} + \sigma \omega_{t},
\]
where $\theta$ controls the rate of mean reversion towards the long-term average $\mu$, $\sigma$ is the volatility parameter, and $\omega_{t}$ follows an i.i.d.\ standard normal distribution. 
This is also known as an autoregressive AR(1) process.

The mean-reverting behavior of $P_{t}$ is particularly realistic in the context of Ethereum's EIP-1559 update, which introduces a mechanism for adjusting transaction fees that inherently aims to  stabilize block sizes around a target size. Specifically, the gas fee for the next block increases if the current demand exceeds the target, and decreases if it falls below. If current demand exceeds the target then prices will increase, but as the demand subsequently responds to the increased prices, it will decrease, and in turn drive down future prices. This inherently leads to a mean-reverting dynamic. See \cite{roughgarden2020transaction} for an economic analysis of EIP-1559. Using a mean-reverting process to model $P_{t}$ captures the realistic features of EIP-1559 while simplifying the model for better analysis. This approach is also supported in the literature; for example, \cite{meister2024gas} employ a fractional Ornstein-Uhlenbeck  process to model gas fees, which is also a type of mean-reverting process.

\medskip
\noindent\textbf{Cost Structure.} The cost incurred at the $t_\text{th}$ L1 block is given by:
\[
c(Q_{t}, S_{t}, P_{t}) = a(Q_{t} - S_{t}) + (b_0 + b_1 S_{t}) P_{t} \mathbf{1}_{\{S_{t} > 0\}},
\]
where $a$ is the proportional penalty on the number of waiting transactions after posting, $b_0$ is the fixed cost when transactions are posted, and $b_1$ scales linearly with the number of posted transactions, reflecting the sensitivity of the cost to the number of transactions. Observe that charging a linear penalty on the number of waiting transactions in each time step is equivalent to charging a linear penalty in the total number of time steps each transaction waits \cite{little2008little}. To see this more clearly, note that at each time step, every waiting transaction incurs a cost of $a$. If a transaction waits over multiple time steps, it thus accumulates the sum of these step-by-step penalties. Interchanging the summation over time and over transactions (an argument analogous to summation by parts) demonstrates that this is precisely the same as penalizing the total waiting time for each transaction. Hence, the first term in the cost structure is fundamentally a penalty on queuing delay.

\medskip
\noindent\textbf{Compensation Mechanism.} To mitigate user dissatisfaction due to transaction delays and to align the incentives of the sequencer, the L2 platform could implement a compensation mechanism. For every L1 block of delay experienced by a transaction, the platform refunds the user an amount proportional to the delay, calculated as $a$ times the number of delayed L1 blocks.

This compensation ensures that from the user's perspective, the utility remains consistent irrespective of the transaction posting time. Simultaneously, this approach aligns the incentives of the L2 sequencer with the cost function $c(Q_{t}, S_{t}, P_{t})$ they are optimizing. By providing such refunds, the L2 sequencer demonstrates a commitment to minimizing transaction delays, rather than exploiting these for potential profit. This transparent compensation also reinforces trust in the platform's operational integrity.

\medskip
\noindent\textbf{Objective Function.} The overall objective is to minimize the expected cumulative discounted cost, given by:
\begin{equation*}
J(Q_{t}, P_{t}) = \min_{\{S_{s}\}}
\mathbb{E}\left[\left. \sum_{s \geq t} \gamma^{s-t} \left(a(Q_{s} - S_{s}) + (b_0 + b_1 S_{s}) P_{s} \mathbf{1}_{\{S_{s} > 0\}} \right) \right| Q_t, P_t\right],
\end{equation*}
where $\gamma$ is the discount factor that captures the present value of future costs. The aim is to strike a balance between reducing the delay cost and managing posting costs effectively over time.

\section{Optimal Posting Strategy}

We continue our analysis of the optimal posting strategy under the assumption that the arrivals 
$A_t$ are i.i.d., given a fixed Layer 2 fee. This section builds upon the established objective function and focuses on optimizing the transaction posting mechanism under dynamic gas price conditions. The decision variable, \(S_t\), determines the number of transactions to post per L1 block. It is optimized by evaluating immediate costs and forecasting future system states to enhance transaction processing efficiency on Layer 2.

\begin{theorem}
\label{threshold-policy}
\textbf{Threshold Policy.} For any state $(Q_{t}, P_{t})$, the optimal posting strategy $S_{t}^*$ satisfies:
\[
S_{t}^* = 
\begin{cases} 
Q_{t} & \text{if } Q_t > Q^*(P_t), \\
0 & \text{if } Q_t \leq Q^*(P_t),
\end{cases}
\]
where $Q^*(P_t)$ is a critical threshold dependent on the gas price $P_t$.
\end{theorem}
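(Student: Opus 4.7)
The plan is to establish concavity of the value function $J$ in $Q$, which together with the fixed cost $b_0 P$ incurred only when $S > 0$ forces the optimal $S_t^*$ to sit at an endpoint of the feasible range $\{0, 1, \ldots, Q_t\}$. Writing the Bellman equation
$$J(Q, P) = \min_{0 \leq S \leq Q}\left\{c(Q, S, P) + \gamma \mathbb{E}\!\left[J(Q - S + A, P') \mid P\right]\right\},$$
the strategy is to prove concavity by induction on the value iterates $J_n$, starting from $J_0 \equiv 0$, and then to derive both the endpoint reduction and the threshold identification at the fixed point.

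For the inductive step, assume $J_n(\cdot, P)$ is concave in $Q$ for each $P$. Fixing $(Q, P)$, introduce the auxiliary cost obtained by dropping the indicator and charging $b_0 P$ unconditionally,
$$\tilde f(S) := a(Q - S) + (b_0 + b_1 S) P + \gamma \mathbb{E}\!\left[J_n(Q - S + A, P')\right].$$
Since $J_n$ is concave in its first argument and $Q - S + A$ is linear in $S$, the expectation term is concave in $S$, so $\tilde f$ is a linear function of $S$ plus a concave one, hence concave on $\{0, 1, \ldots, Q\}$ and minimized at $S = 0$ or $S = Q$. The true Bellman integrand coincides with $\tilde f(S)$ for $S \geq 1$ and equals $\tilde f(0) - b_0 P$ at $S = 0$. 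A short case split — if $\tilde f(Q) \leq \tilde f(0)$ then $S = Q$ beats all $S \geq 1$, while if $\tilde f(0) < \tilde f(Q)$ then the $b_0 P$ rebate at $S = 0$ makes not posting strictly better than every $S \geq 1$ — shows the true Bellman minimum always lies in $\{0, Q\}$. Hence
$$J_{n+1}(Q, P) = \min\left\{aQ + \gamma \mathbb{E}\!\left[J_n(Q + A, P')\right],\ (b_0 + b_1 Q) P + \gamma \mathbb{E}\!\left[J_n(A, P')\right]\right\},$$
a pointwise minimum of a concave (by the inductive hypothesis) and an affine function of $Q$, hence concave. Standard contraction arguments then pass concavity to the fixed point $J$.

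Applying the same endpoint analysis to $J$ itself gives $S_t^* \in \{0, Q_t\}$. To extract the threshold form, define
$$Q^*(P) := \sup\left\{Q \geq 0 :\ aQ + \gamma \mathbb{E}\!\left[J(Q + A, P')\right] \leq (b_0 + b_1 Q) P + \gamma \mathbb{E}\!\left[J(A, P')\right]\right\}.$$
At $Q = 0$ the inequality holds strictly (by $-b_0 P < 0$), so $S^* = 0$; for $Q > Q^*(P)$ the inequality is reversed and $S^* = Q$. The monotonicity needed to turn $Q^*(P)$ into a clean threshold follows from concavity of $J$ together with the asymptotically affine behavior of $J(\cdot, P)$ — once $Q$ is large enough that posting everything is optimal, $J(Q, P) = (b_0 + b_1 Q) P + \gamma \mathbb{E}[J(A, P')]$ is exactly linear in $Q$ — which pins down the sign pattern of the difference of the two candidate costs.

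The hardest part will be the concavity-preservation step inside the value iteration, because the endpoint reduction and the concavity claim are entangled: one uses the other, and both must be executed together within a single inductive step. The discontinuity of $c(Q, S, P)$ at $S = 0$ introduced by $\mathbf{1}_{\{S > 0\}}$ is precisely what blocks a direct ``minimum of concave functions'' argument on the Bellman integrand and motivates the auxiliary $\tilde f$. Once concavity is secured, the fixed-point endpoint reduction and the threshold identification are comparatively routine.
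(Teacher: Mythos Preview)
Your concavity argument via value iteration is correct and is a clean alternative to the paper's route. The paper establishes $2J(Q{+}1,P)\ge J(Q,P)+J(Q{+}2,P)$ directly at the fixed point by a sample-path coupling: take the optimal policy started from $Q{+}1$, and on each realization shift the first positive posting by $\pm 1$ to manufacture feasible policies from $Q$ and $Q{+}2$, then compare costs pathwise and take expectations. Your induction on the iterates $J_n$ is the more textbook approach and avoids that construction; the paper's coupling is more direct in that it never passes through approximants. For the endpoint reduction the paper proves a separate ``once $J^Q(Q,P,S)$ starts decreasing in $S$ it keeps decreasing'' lemma, whereas your auxiliary-$\tilde f$ device reaches the same conclusion by exploiting concavity of $\tilde f$ in $S$ and then handling the $b_0P$ rebate at $S=0$ by hand. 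Both are equivalent uses of the concavity of $J$, and your version is arguably tidier.

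Where your sketch has a genuine gap is the threshold step. You assert that concavity of $J$ together with the ``asymptotically affine behavior'' of $J(\cdot,P)$ forces a single sign change of
\[
D(Q):=aQ+\gamma\,\mathbb{E}[J(Q{+}A,P')] - (b_0+b_1Q)P - \gamma\,\mathbb{E}[J(A,P')],
\]
but the asymptotic linearity you invoke is precisely the statement that posting is optimal for all large $Q$, which is part of what must be shown; as written the argument is circular. More concretely, $D$ is concave with $D(0)=-b_0P<0$, and a concave function that starts negative can become positive on a bounded interval and then return to negative values, so concavity alone does not deliver a single crossing. The paper closes this differently: rather than analyzing $D$ globally, it shows the implication $S^*(Q,P)=Q\Rightarrow S^*(Q{+}1,P)=Q{+}1$ directly, by combining the Bellman comparison at $Q$ with the concavity-based chord inequality relating $\mathbb{E}[J(Q{+}A,P')-J(A,P')]$ to $\tfrac{Q}{Q+1}\,\mathbb{E}[J(Q{+}1{+}A,P')-J(A,P')]$. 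To complete your proof you will need an argument of that monotonicity type --- propagating optimality of posting from $Q$ to $Q{+}1$ --- rather than the asymptotic-slope heuristic.
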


\begin{proof} 
    The proof is provided in Appendix \ref{proofs}.    \qed
\end{proof}

The threshold policy mandates a binary decision: either clear the queue by posting all transactions or withhold all transactions when conditions do not favor posting, based on the queue length relative to a dynamic threshold $Q^*(P_t)$. This straightforward, binary approach greatly simplifies operational decisions by removing intermediate options, which streamlines the posting mechanism and enhances system efficiency by ensuring that decisions are consistently aligned with current economic conditions.

By leveraging the threshold structure of the optimal policy, we significantly simplify the decision-making process in our Markov Decision Problem \footnote{A Markov Decision Problem models sequential decision-making, where a system transitions between states based on chosen actions, with the goal of minimizing expected costs or maximizing rewards.} \cite{puterman2014markov}. This structure transforms the action space from $\{0,1,\cdots,Q\}$ to just two discrete options $0$ or $Q$. Utilizing this binary action space, we implemented an efficient policy iteration algorithm. At each iteration, we evaluate the expected total cost for both possible actions by calculating the immediate cost and the expected future cost based on the current value function estimate. By directly comparing these two options, we can promptly update the policy without exhaustively searching over all possible actions. In contrast, \cite{eff_batch_posting} employs a Q-learning algorithm for the optimal posting strategy without utilizing a threshold policy, resulting in a computation time of approximately 72 hours for a single run, whereas our method completes in just about 6 seconds.

\section{Layer 2 Pricing}

In our model, we assume a uniform L2 fee per transaction, simplifying the varying costs typically seen due to different gas usages per transaction in reality. The fee mechanism in our analysis, similar to many L2 platforms, fundamentally addresses two critical objectives: maintaining budget balance and managing congestion. Therefore, our strategy seeks to optimize network throughput within the framework established by these baseline conditions, ensuring that enhancements in throughput do not compromise our commitment to budget balance and effective congestion management.

The objectives of the fee mechanism in our study, and similarly in many L2 platforms, are twofold: to maintain budget balance and to manage congestion effectively. A similar idea is also mentioned in \cite{bousfield2022arbitrum}. The goal is to satisfy the minimum level of these two objectives; under this condition, we maximize the chain throughput. Our mechanism is well-suited for implementation on most optimistic rollup platforms, such as Arbitrum, Base, and Optimism.

\medskip
\noindent\textbf{Arrival Rate.} We denote the arrival rate of transactions by \( \lambda(g) \), which represents the expected number of incoming transactions between L1 block times, \( \mathbb{E}[A(t;g)] \), where \( A(t;g) \) represents the random variable of the number of incoming transactions between L1 block times, whose distribution varies as a function of the fee \( g \). In our model, we assume that \( A(t;g) \) follows a Poisson distribution \cite{katti1968handbook}, which is commonly used to describe event counts that have independent increments and occur with a constant rate. We further assume a linear relationship between the arrival rate and the fee:
$$\lambda(g) = \lambda_0 - kg,$$
where \( \lambda_0 \) is the maximum potential arrival rate when the fee is zero, and \( k \) is a constant that captures the sensitivity of the arrival rate to changes in the fee. If the fee exceeds the threshold \( g > \lambda_0/k \), the arrival rate becomes zero, i.e., no arrivals occur. The linear demand curve assumption is supported by empirical analysis of \cite{pantera2024pricing}. While this assumption simplifies the analysis of the root existence condition, our fee mechanism is robust and applicable to a broad range of arrival rate models.

\medskip
\noindent\textbf{Budget Balance.} Budget balance is achieved when the total fees collected from transactions match the total operational costs of processing those transactions within the L2 framework. To establish this equilibrium, we define the fee \( f \) to satisfy the following condition under the stationary distribution:
\begin{equation}
\mathbb{E} \left[ A(t;f) f - c(S_t^*, Q_t, P_t; f) \right] = 0,
\label{budget-balance}
\end{equation}
where \( c(S_t^*, Q_t, P_t; f) \) is the cost associated with the optimal posting strategies given the L2 fee \( f \). The root of this equation, \( f = f^* \), defines the fee level at which the network achieves budget balance, ensuring financial sustainability by perfectly aligning revenues with costs.

\begin{property} \label{monotonic-posting}
    $\mathbb{E}\left[ c(S_t^*,Q_t,P_t;f) \right]$ is strictly monotonically decreasing with respect to $f$.
\end{property}
\begin{proof}
    The proof is provided in Appendix \ref{proofs}.  \qed
\end{proof}

\begin{corollary}
    The expected profit and loss per L1 block, given by
    \[
    \mathbb{E}\left[ A(s; f) f - c(S_s^*, Q_s, P_s; f) \right],
    \]
    is strictly monotonically increasing for \( f \in [0, \lambda_0/(2k)] \). 
\end{corollary}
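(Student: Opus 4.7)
The plan is to split the expected profit/loss into its revenue term and its (negated) cost term, and argue that each is non-decreasing in $f$ on $[0, \lambda_0/(2k)]$, with at least one being strictly increasing. Concretely, since $f$ is a deterministic parameter, the revenue term reduces to $\mathbb{E}[A(s;f)]\,f = \lambda(f)\,f = \lambda_0 f - k f^2$, a concave quadratic whose vertex lies exactly at $f = \lambda_0/(2k)$. For the cost term, Property~\ref{monotonic-posting} tells us that $\mathbb{E}[c(S_s^*, Q_s, P_s; f)]$ is strictly monotonically decreasing in $f$, hence its negative is strictly monotonically increasing.

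First I would verify that $\lambda_0 f - kf^2$ is strictly increasing on the closed interval $[0, \lambda_0/(2k)]$. For $0 \le f_1 < f_2 \le \lambda_0/(2k)$, the difference factors as $(f_2 - f_1)(\lambda_0 - k(f_1 + f_2))$, and since $f_1 + f_2 < \lambda_0/k$ strictly (using $f_1 < \lambda_0/(2k)$), this quantity is strictly positive. Next I would invoke Property~\ref{monotonic-posting} directly to conclude that $-\mathbb{E}[c(S_s^*, Q_s, P_s; f)]$ is strictly increasing on the same interval (indeed on all of $[0, \lambda_0/k]$). Adding a strictly increasing function to a strictly increasing function yields a strictly increasing function, which gives the claim.

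There is no serious obstacle: the whole argument is essentially an application of Property~\ref{monotonic-posting} plus the elementary observation that the quadratic revenue is maximized at $\lambda_0/(2k)$, which is precisely why this is the right endpoint of the interval in the statement. The only subtlety worth pausing on is endpoint behavior of the revenue term, namely that the derivative $\lambda_0 - 2kf$ vanishes at $f = \lambda_0/(2k)$; this does not break strict monotonicity on the closed interval, as the factorization above demonstrates. Once that is noted, the result follows immediately.
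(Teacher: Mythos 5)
Your proof is correct and follows essentially the same route as the paper: decompose the expected profit into the revenue term $\lambda(f)f=(\lambda_0-kf)f$, which is strictly increasing on $[0,\lambda_0/(2k)]$, and the negated expected cost, which is strictly increasing by Property~\ref{monotonic-posting}. Your extra check of the endpoint via the factorization $(f_2-f_1)\bigl(\lambda_0-k(f_1+f_2)\bigr)$ is a welcome detail the paper leaves implicit, but it does not change the argument.
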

\begin{proof}
The corollary is straightforward, since
\begin{align*}
    \mathbb{E}\left[ A(s; f) f - c(S_s^*, Q_s, P_s; f) \right] &= \lambda(f) f - \mathbb{E}[- c(S_s^*, Q_s, P_s; f)] \\
    &= (\lambda_0-kf)f-\mathbb{E}[c(S_s^*, Q_s, P_s; f)],
\end{align*}
where the product $(\lambda_0-kf)f$ is strictly monotonically increasing for $f \in [0, \lambda_0/(2k)] $, and the expected cost $\mathbb{E}[c(S_s^*, Q_s, P_s; f)]$ is strictly monotonically decreasing due to Property~\ref{monotonic-posting}.  \qed
\end{proof}

\begin{property}
The arrival rate $\lambda(f)=0$ if and only if $\mathbb{E}\left[ c(S_t^*,Q_t,P_t;f) \right]=0$.
\end{property}

\begin{property} \label{cost-bound}
    The cost function $\mathbb{E}\left[ c(S_t^*,Q_t,P_t;f) \right]$ can be bounded by a decreasing linear function. Specifically,
    $$\mathbb{E}\left[ c(S_t^*,Q_t,P_t;f) \right] \leq (b_0+b_1 \lambda_0-b_1 kf) \mu.$$
\end{property}
\begin{proof}
    The proof is provided in Appendix \ref{proofs}.  \qed
\end{proof}

\begin{theorem} \label{existence-bb}
    \textbf{Existence of Unique Budget Balance Fee.} If
    \[
    \frac{\lambda_0^2}{4k} \geq (b_0+b_1 \lambda_0) \mu,
    \]
    then there exists a unique fee $f^* \in [0,\lambda_0/(2k)]$ that achieves budget balance, being the root of the equation defined in equation (\ref{budget-balance}).
\end{theorem}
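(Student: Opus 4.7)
The plan is to apply the Intermediate Value Theorem to the per-block profit-and-loss function
\[
h(f) := (\lambda_0 - kf)\,f - \mathbb{E}\bigl[c(S_t^*, Q_t, P_t; f)\bigr]
\]
on the interval $[0, \lambda_0/(2k)]$, leveraging the strict monotonicity already proved in the Corollary above. Uniqueness will be immediate from that monotonicity, while existence will follow from a sign change of $h$ between the two endpoints.

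First I would evaluate $h$ at the left endpoint $f = 0$. Since $\lambda(0) = \lambda_0 > 0$, the ``if and only if'' Property relating zero arrival rate to zero expected cost forces $\mathbb{E}[c(S_t^*, Q_t, P_t; 0)] > 0$, while the revenue term $\lambda(0)\cdot 0$ vanishes; hence $h(0) < 0$. Next I would bound $h$ at the right endpoint $f = \lambda_0/(2k)$: the revenue equals $(\lambda_0/2)\cdot(\lambda_0/(2k)) = \lambda_0^2/(4k)$, and Property~\ref{cost-bound} instantiated at $f=\lambda_0/(2k)$ gives
\[
\mathbb{E}\bigl[c(S_t^*, Q_t, P_t; \lambda_0/(2k))\bigr] \leq \bigl(b_0 + b_1 \lambda_0/2\bigr)\mu \leq (b_0 + b_1 \lambda_0)\mu.
\]
Combined with the hypothesis $\lambda_0^2/(4k) \geq (b_0+b_1\lambda_0)\mu$, this yields $h(\lambda_0/(2k)) \geq 0$.

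With a sign change in hand, uniqueness on $[0, \lambda_0/(2k)]$ is a direct consequence of the Corollary, which asserts strict monotonicity of $h$ on this interval, so at most one root can occur. Existence then reduces to one application of the Intermediate Value Theorem, provided we know that $h$ is continuous in $f$.

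The main obstacle will be verifying continuity of $f \mapsto \mathbb{E}[c(S_t^*, Q_t, P_t; f)]$. The revenue piece $(\lambda_0-kf)f$ is manifestly continuous, and the Poisson arrival kernel depends smoothly on its rate, but the optimal posting policy $S_t^*$ and its thresholds $Q^*(P_t)$ from Theorem~\ref{threshold-policy} themselves shift with $f$, and the expectation is taken against the stationary distribution of the resulting Markov chain on $(Q_t, P_t)$. I expect this continuity to follow from a standard perturbation argument for ergodic Markov chains whose transition kernels depend continuously on a parameter, together with continuity of the optimal value function in the MDP primitives; beyond this, the proof collapses to the endpoint evaluations and the IVT.
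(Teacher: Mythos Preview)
Your proposal is correct and matches the paper's approach: exhibit a sign change of $h$ on $[0,\lambda_0/(2k)]$ and invoke the strict monotonicity from the Corollary for uniqueness. The paper takes a minor detour---instead of evaluating directly at $\lambda_0/(2k)$, it first locates an intermediate $\hat f\in[0,\lambda_0/(2k)]$ where revenue meets the linear bound $C(f)=(b_0+b_1\lambda_0-b_1kf)\mu$ and runs the sign-change argument on $[0,\hat f]$---but this is equivalent to (and slightly less direct than) your route, and the paper simply asserts existence of the root without addressing the continuity of $f\mapsto\mathbb{E}[c(S_t^*,Q_t,P_t;f)]$ that you rightly flag.
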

\begin{proof}
    The proof is provided in Appendix \ref{proofs}.  \qed
\end{proof}

The condition outlined in the theorem ensures that there is a fee level, \( f^* \), which precisely balances the revenues from transaction fees with the costs of transaction processing, thereby achieving budget balance. Setting the fee below \( f^* \) results in revenues that fail to cover the operational costs, leading to financial losses for the platform. Conversely, setting the fee above \( f^* \) may generate surplus revenue, potentially turning a profit.

\medskip
\noindent\textbf{Target Arrival Rate.} In managing the L2 network, it is critical to recognize the system's maximum capacity for executing transactions. The target arrival rate, denoted as \( \bar{\lambda} \), is set based on this maximum capacity to ensure the network operates efficiently without being overwhelmed by an excessive volume of transactions.

\medskip
\noindent\textbf{Congestion Control.} To align the actual arrival rate of transactions with the target \( \bar{\lambda} \), a congestion control fee \( p \) is utilized. The fee that precisely balances the incoming transaction rate with the network’s capacity is \( p^* \), defined by the root of the equation:
\begin{equation}
    \mathbb{E}[\bar{\lambda} - A(t; p)] = 0.
    \label{congestion-control}
\end{equation}
The fee \( p^* \) ensures that the number of transaction arrivals matches the target arrival rate \( \bar{\lambda} \).

\begin{theorem} \label{existence-cc}
    \textbf{Existence of Unique Congestion Control Fee.} If
    \[
    \frac{\lambda_0}{2} \leq \bar{\lambda} \leq \lambda_0,
    \]
    then there exists a unique fee $p^* \in [0,\lambda_0/(2k)]$ that achieves congestion control, being the root of the equation defined in equation (\ref{congestion-control}).
\end{theorem}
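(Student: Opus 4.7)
The plan is to reduce the problem to solving a single linear equation and then verify both existence and uniqueness directly. I would begin by substituting the expected value of the arrival distribution into the defining equation. Since the paper models $A(t;p)$ as Poisson with rate $\lambda(p) = \lambda_0 - kp$ (assumed non-negative, so that the implicit range is $p \leq \lambda_0/k$), the equation $\mathbb{E}[\bar{\lambda} - A(t;p)] = 0$ collapses to the closed-form relation $\bar{\lambda} = \lambda_0 - kp$.

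Next I would solve explicitly, obtaining the candidate
\[
p^* = \frac{\lambda_0 - \bar{\lambda}}{k}.
\]
To verify that $p^*$ lies in the required interval $[0, \lambda_0/(2k)]$, I would use the two inequalities in the hypothesis. The upper bound $\bar{\lambda} \leq \lambda_0$ immediately gives $p^* \geq 0$, while the lower bound $\bar{\lambda} \geq \lambda_0/2$ gives $p^* \leq (\lambda_0 - \lambda_0/2)/k = \lambda_0/(2k)$. So existence in the stated interval follows directly from the hypothesis.

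For uniqueness, I would observe that the map $p \mapsto \mathbb{E}[A(t;p)] = \lambda_0 - kp$ is strictly decreasing in $p$ (since $k > 0$), so $p \mapsto \bar{\lambda} - \lambda(p)$ is strictly increasing and can vanish at most once. Combined with the existence argument above, this yields a unique root $p^* \in [0, \lambda_0/(2k)]$.

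Honestly, given the linear demand specification, there is no substantive obstacle here: the proof is essentially an algebraic verification, and the only subtlety is to make sure the hypothesis $\lambda_0/2 \leq \bar{\lambda} \leq \lambda_0$ is used precisely to place $p^*$ in the interval $[0, \lambda_0/(2k)]$ matching the budget-balance fee range from Theorem~\ref{existence-bb}, so that both fees live in a common domain where later comparisons (e.g., taking $\max(f^*, p^*)$ in the fee mechanism) are meaningful.
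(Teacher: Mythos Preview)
Your proposal is correct and matches the paper's own proof essentially line for line: reduce the expectation equation to the linear relation $\bar{\lambda} = \lambda_0 - kp$, solve to get $p^* = (\lambda_0 - \bar{\lambda})/k$, and use the hypothesis $\lambda_0/2 \le \bar{\lambda} \le \lambda_0$ to place $p^*$ in $[0, \lambda_0/(2k)]$. The paper's argument is equally short and notes, as you do, that this is straightforward given the linear arrival-rate model.
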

\begin{proof}
    This proof is straightforward, because we only need to solve
    $$0 = \mathbb{E}[\bar{\lambda} - A(t; p^*)] = \bar{\lambda} - \lambda(p^*) = \bar{\lambda} - \lambda_0 + kp^*.$$
    The equation has a unique root $p^* = (\lambda_0-\bar{\lambda}) /k.$
    The value $p^* \in [0,\lambda_0/(2k)]$ if and only if $\lambda_0/2 \leq \bar{\lambda} \leq \lambda_0$.  \qed
\end{proof}

The condition outlined in the theorem ensures that there is a fee level, \( p^* \), which precisely aligns the actual transaction arrival rate with the target rate \( \bar{\lambda} \), effectively managing congestion. Setting the fee below \( p^* \) may lead to network overload, while setting it above \( p^* \) can result in underutilization of network capacity. 

\medskip
\noindent\textbf{Optimal Fee Strategy.} Our fee mechanism is strategically designed to simultaneously achieve budget balance and manage congestion, fundamental conditions for the stable operation of an L2 platform. To this end, the fee must be set at least as high as \( f^* \) to cover operational costs and ensure financial sustainability, and at least as high as \( p^* \) to regulate the flow of transactions and prevent system overload. 

Given our goal to maximize network throughput, which ideally involves keeping fees as low as possible, the optimal fee charged is \( \max(f^*, p^*) \). This strategy ensures that fees are not set higher than necessary to meet the foundational requirements, allowing the platform to process the maximum number of transactions without compromising financial viability or operational stability. By charging \( \max(f^*, p^*) \), we maintain a balance that supports the highest possible throughput within the constraints of budget balance and congestion management.

\subsection{L2 Fee Mechanism} \label{practical-mechanism}

In this part, we introduce a dynamic fee mechanism designed to achieve both budget balance and congestion control while maximizing throughput. When considering only budget balance, we have a fee mechanism based on updating the fee $f$. Similarly, when considering only congestion control, we have a fee mechanism based on updating the fee $p$. Since we may not know in advance whether the budget balance fee $f^*$ is greater than the congestion control fee $p^*$, we propose an adaptive approach that updates both fees based on observed network conditions to determine the optimal fee.

\medskip
\noindent\textbf{Fee Update Mechanism for Budget Balance.} For the budget-balancing fee $f$, we define its update rule as:
\[
f_{t+1} = \Pi_{[0, \lambda_0/(2k)]} \left( f_t - a \cdot X(t; f_t) \right),
\]
where $\Pi_{[x_a, x_b]}(x)$ projects $x$ onto the interval $[x_a, x_b]$ to ensure the fee remains within feasible bounds, and $a > 0$ is a step size parameter that controls the magnitude of each fee update, i.e., larger values lead to more aggressive adjustments, while smaller values yield more conservative changes. Here, $X(t; f_t)$ represents the cumulative profit or loss over the $t$-th posting period when using fee $f_t$, defined as:
\[
X(t; f_t) = \sum_s \left( A(s; f_t) f_t - c(S_s^*, Q_s, P_s; f_t) \right).
\]
In this expression, $A(s; f_t)$ is the observed number of arrivals during the $s$-th L1 block time after applying fee $f_t$, and $c(S_s^*, Q_s, P_s; f_t)$ is the cost associated with the optimal posting strategy at the $s$-th L1 block given fee $f_t$. The summation over $s$ aggregates over multiple L1 block intervals within the $t$-th posting period.
Specifically, suppose that after the $(t-1)$-th posting, the queue is empty, thanks to Theorem \ref{threshold-policy}, the Threshold Policy. During the $t$-th posting period, transactions arrive over several L1 block intervals. We may decide not to post immediately, accumulating transactions and observing the profit and loss in each interval (the revenue from fees minus the cost). We sum these values until we decide to post, obtaining $X(t; f_t)$.

The update rule aims to achieve financial sustainability by aligning fees with operational costs. If $X(t; f_t) < 0$, it suggests the current fee is not covering costs, prompting an increase in $f_{t+1}$. Conversely, if $X(t; f_t) > 0$, there is room to reduce the fee in $f_{t+1}$ without undermining financial health, potentially boosting transaction volume.

\medskip
\noindent\textbf{Fee Update Mechanism for Congestion Control.} For the congestion control fee $p$, we define its update rule as:
\[
p_{t+1} = \Pi_{[0, \lambda_0/(2k)]} \left( p_t - b \cdot Y(t; p_t) \right),
\]
where $b > 0$ is a step size parameter. Here, $Y(t; p_t)$ represents the cumulative difference between the target arrival rate and the actual arrivals over the $t$-th posting period when using fee $p_t$, defined as:
\[
Y(t; p_t) = \sum_s \left( \bar{\lambda} - A(s; p_t) \right).
\]
In this expression, $A(s; p_t)$ is the observed number of arrivals during the $s$-th L1 block time after applying fee $p_t$, and $\bar{\lambda}$ is the target arrival rate that the network aims to maintain to avoid congestion. The summation over $s$ captures the total effect over multiple L1 block intervals within the $t$-th posting period.

The congestion control fee is adjusted to align the actual transaction flow with the target arrival rate. A negative $Y(t; p_t)$ indicates congestion, suggesting a need to increase $p_{t+1}$ to reduce the incoming transaction rates. Conversely, a positive $Y(t; p_t)$ suggests the capacity to handle more transactions, allowing a fee reduction.

\medskip
\noindent\textbf{Adaptive Fee Selection Mechanism.} Since we do not know whether $f^* \leq p^*$ or $f^* > p^*$, we adopt an adaptive mechanism that updates both fees based on observed network performance to determine the optimal fee.

Let $g_t$ represent the fee applied during the $t$-th posting period, and let $\delta_t$ be an indicator variable, representing whether $g_t$ is updated based on the fee update mechanism for budget balance or congestion control. If $g_t$ is updated based on the budget balance fee update mechanism, the indicator variable $\delta_t = 1$; otherwise, the indicator variable $\delta_t = 0$. 


Define \( \zeta(t) = \max \{s \leq t: \delta_s = 1 \}\) and \( \eta(t) = \max \{s \leq t: \delta_s = 0\}\), which represent the last time we have updated the budget balance fee and congestion control fee up to time $t$ respectively. The decision variable $\delta_{t+1}$ and the fee $g_{t+1}$ for the next posting period are updated according to the following rules:

\begin{itemize}
    \item \textbf{If $\delta_t = 1$}:
    \begin{itemize}
        \item If $Y(t; g_t) < 0$, indicating network congestion, we set $\delta_{t+1} = 0$ to switch to the congestion control fee sequence. The corresponding fee update rule
        $$g_{t+1} = \Pi_{[0, \lambda_0/(2k)]} \left( g_{\eta(t)} - b \cdot Y(\eta(t); g_{\eta(t)}) \right).$$
        \item If $Y(t; g_t) \geq 0$, we keep $\delta_{t+1} = 1$, continuing with the budget balance fee sequence. The corresponding fee update rule
        $$g_{t+1} = \Pi_{[0, \lambda_0/(2k)]} \left( g_{\zeta(t)} - a \cdot X(\zeta(t); g_{\zeta(t)}) \right).$$
    \end{itemize}
    \item \textbf{If $\delta_t = 0$}:
    \begin{itemize}
        \item If $X(t; g_t) < 0$, indicating that revenues fail to cover operational costs, we set $\delta_{t+1} = 1$. This decision switches back to the budget balance fee update mechanism for the next period
        $$g_{t+1} = \Pi_{[0, \lambda_0/(2k)]} \left( g_{\zeta(t)} - a \cdot X(\zeta(t); g_{\zeta(t)}) \right).$$
        \item If $X(t; g_t) \geq 0$, indicating financial stability, we keep $\delta_{t+1} = 0$, continuing with the congestion control fee sequence. The corresponding fee update for the next period
        $$g_{t+1} = \Pi_{[0, \lambda_0/(2k)]} \left( g_{\eta(t)} - b \cdot Y(\eta(t); g_{\eta(t)}) \right).$$
    \end{itemize}
\end{itemize}

This method shares similarity with the multi-armed bandit problem, where we iteratively select the fee update mechanism between different objectives based on observed performance. It enables the system to adaptively switch between fee mechanisms as necessary, optimizing for either budget balance or congestion control in response to real-time conditions. This mechanism is particularly robust, adept at handling non-stationary network conditions and ensuring that fees stay close to the ideal level despite fluctuations. Through this adaptive approach, the system maintains crucial controls over budget and congestion, thereby enhancing throughput and ensuring consistent operational efficiency.

\subsection{Analysis}
We analyze the theoretical results for the L2 fee mechanism, considering some relaxations such as decreasing step size and i.i.d. Layer 1 gas fee prices. This analysis aims to establish a robust understanding of the dynamics underpinning the fee update mechanism within a theoretical framework.

\begin{proposition} \label{renewal-thm}
    For any feasible fee \( g \), where \( 0 \leq g \leq \lambda_0/(2k) \), the expected value of \( X(t; g) \) under the stationary distribution can be expressed as:
    \[
    \mathbb{E}[X(t; g)] = \mathbb{E}[\tau(g)] \cdot \mathbb{E}[A(s; g) g - c(S_s^*, Q_s, P_s; g)],
    \]
    where \( \tau(g) \) is the number of L1 blocks between two consecutive postings, which empty the queue according to Theorem \ref{threshold-policy}. Therefore, if \( f^* \) is the fee level such that \(\mathbb{E}[A(f^*) f^* - c(S_s^*, Q_s, P_s; f^*)] = 0\), then \(\mathbb{E}[X(t; f^*)] = 0\). Similarly, for the congestion control metric:
    \[
    \mathbb{E}[Y(t; g)] = \mathbb{E}[\tau(g)] \cdot \mathbb{E}[\bar{\lambda} - A(s; g)].
    \]
    Consequently, if \( p^* \) is the fee level where \(\mathbb{E}[\bar{\lambda} - A(s; p^*)] = 0\), then \(\mathbb{E}[Y(t; p^*)] = 0\).
\end{proposition}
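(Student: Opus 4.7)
The plan is to exploit the regenerative structure induced by Theorem~\ref{threshold-policy}. Since the optimal policy either posts nothing or posts the entire queue, every posting event empties the queue, so the sequence of posting times is a sequence of regeneration points for the joint process $(Q_t, P_t)$ (the L1 price is exogenous and Markov, and the queue restarts from zero after each posting). Define a cycle to be the stretch of L1 blocks between two consecutive postings at a fixed fee $g$, and let $\tau(g)$ denote its (random) length. Then, by the definition of $X(t;g)$ in the preceding subsection,
\[
X(t;g) \;=\; \sum_{s=1}^{\tau(g)} \bigl( A(s;g)\, g - c(S_s^*, Q_s, P_s; g) \bigr),
\]
i.e.\ $X(t;g)$ is exactly the reward accumulated over the $t$-th cycle.

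Next I would invoke the renewal reward theorem (or, equivalently, Wald's identity in its regenerative form). Under the stationary distribution, the per-block expected reward $\mathbb{E}[A(s;g)g - c(S_s^*, Q_s, P_s; g)]$ coincides with the long-run time average, which by the renewal reward theorem equals $\mathbb{E}[\text{reward per cycle}]/\mathbb{E}[\tau(g)]$. Rearranging gives
\[
\mathbb{E}[X(t;g)] \;=\; \mathbb{E}[\tau(g)] \cdot \mathbb{E}\bigl[A(s;g)g - c(S_s^*, Q_s, P_s; g)\bigr],
\]
which is the claim. The identical argument applied to the per-block quantity $\bar\lambda - A(s;g)$ yields the analogous statement for $Y(t;g)$. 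The corollaries for $f^*$ and $p^*$ are then immediate: if the per-block expectation vanishes, multiplying by the finite $\mathbb{E}[\tau(g)]$ gives zero.

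The main obstacle is the within-cycle dependence. The summands $A(s;g)g - c(S_s^*, Q_s, P_s; g)$ at different $s$ inside a single cycle are \emph{not} i.i.d., because $Q_s$ grows with $s$ and $P_s$ is autocorrelated, and $\tau(g)$ is itself a stopping time that depends on these quantities through the threshold $Q^*(P_s)$. A naive Wald application would therefore fail. The correct route is to work at the level of entire cycles: consecutive cycles are i.i.d.\ (or at worst one-dependent, once one accounts for $P_t$) because the queue regenerates at zero, and $P_t$ is a stationary Markov chain. Hence the cycle-indexed reward sequence is stationary, and the renewal reward identity applies directly. I would also briefly verify the integrability conditions that justify the renewal reward theorem: $\mathbb{E}[\tau(g)] < \infty$ (which follows because the threshold $Q^*(P_t)$ is finite and arrivals have positive mean, so the queue almost surely crosses the threshold in finitely many steps), and the per-cycle reward has finite expectation (which follows from Property~\ref{cost-bound} together with the finite first moment of $\tau(g)$).
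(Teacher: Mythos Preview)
Your proposal is correct and follows essentially the same route as the paper: the paper's proof writes $\mathbb{E}[X(t;g)]$ as a limit of empirical averages over $N$ cycles, splits the sum into (average cycle length)$\times$(time average of the per-block reward), and invokes the ergodic theorem and the strong law of large numbers to identify the two factors---which is precisely an inline derivation of the renewal reward identity you cite by name. If anything, you are more careful than the paper about naming the hypotheses (integrability of $\tau(g)$, the within-cycle dependence, and the role of the i.i.d.\ $P_t$ assumption in making successive cycles identically distributed), which the paper's proof leaves implicit.
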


\begin{proof} 
    The proof is provided in Appendix \ref{proofs}.  \qed
\end{proof}

To underpin a well-founded theoretical result, we incorporate the following considerations:
\begin{enumerate}
    \item \textbf{Decreasing Step Size.} The step size parameters \( a_t \) and \( b_t \) decrease over time, defined as \( a_t = a/(t+1) \) and \( b_t = b/(t+1) \).
    \item \textbf{Independent and Identically Distributed L1 Gas Fee.} We assume that the L1 gas fee sequence \( \{p_s\} \) follows an independent and identically distributed distribution.
    \item \textbf{Multiple Observations.} For each fee update, we consider \( \kappa \) postings and use the observations of all these postings to select the fee update mechanism.
\end{enumerate}
Given these considerations, our fee update rules can be described as follows:
\begin{itemize}
    \item If $\delta_t = 1$ and $\sum_{s=\kappa t+1}^{s=\kappa(t+1)} Y(s; g_t) \geq 0$, or if $\delta_t = 0$ and $\sum_{s=\kappa t+1}^{s=\kappa(t+1)} X(s; g_t) < 0$, we set $\delta_{t+1} = 1$, and the fee update rule
    $$g_{t+1} = \Pi_{[0, \lambda_0/(2k)]} \left( g_{\zeta(t)} - a_{i(t)} \cdot \frac{\sum_{s=\kappa \zeta(t)+1}^{s=\kappa (\zeta(t)+1)} X(s; g_{\zeta(t)})}{\kappa} \right),$$
    where \( i(t) = \sum_{s=1}^t \mathbf{1}\{\delta_s = 1\} \), represents the total number of budget balance fee updates up to \( t \).
    \item If $\delta_t = 1$ and $\sum_{s=\kappa t+1}^{s=\kappa(t+1)} Y(s; g_t) < 0$, or if $\delta_t = 0$ and $\sum_{s=\kappa t+1}^{s=\kappa(t+1)} X(s; g_t) \geq 0$, we set $\delta_{t+1} = 0$, and the fee update rule
    $$g_{t+1} = \Pi_{[0, \lambda_0/(2k)]} \left( g_{\eta(t)} - b_{j(t)} \cdot \frac{\sum_{s=\kappa \eta(t)+1}^{s=\kappa (\eta(t)+1)} Y(s; g_{\eta(t)})}{\kappa} \right),$$
    where \( j(t) = t - i(t) \), represents the total number of congestion control fee updates up to \( t \).
\end{itemize}

\begin{theorem} \label{fee-convergence-thm}
    Assuming the conditions and relaxations defined above hold, consider the dynamic fee update mechanisms with \( \delta_t \) indicating whether the fee update is for budget balance or congestion control. Define:
    \begin{itemize}
        \item \( i(t) = \sum_{s=1}^t \mathbf{1}\{\delta_s = 1\} \), the total number of budget balance fee updates up to \( t \),
        \item \( j(t) = t - i(t) \), the total number of congestion control fee updates up to \( t \),
        \item \( f_{i(t)} = g_{\zeta(t)} \), linking the budget balance fee to the last update up to \( t \),
        \item \( p_{j(t)} = g_{\eta(t)} \), linking the congestion control fee to the last update up to \( t \).
    \end{itemize}
    Suppose the condition $\lambda_0^2/(4k) \geq (b_0 + b_1 \lambda_0) \mu$ for the existence of unique budget balance fee in Theorem \ref{existence-bb}, and the condition $\lambda_0/2 \leq \bar{\lambda} \leq \lambda_0$ for the existence of unique congestion control fee in Theorem \ref{existence-cc} hold. Then, the following properties hold almost surely as \( t \rightarrow \infty \):
    \begin{enumerate}
        \item Both \( i(t) \rightarrow \infty \) and \( j(t) \rightarrow \infty \).
        \item The sequences \( f_t \rightarrow f^* \) and \( p_t \rightarrow p^* \).
        \item The long-run average proportions of updates,
        \[
        \frac{i(t)}{t} \rightarrow \pi_f \quad \text{and} \quad \frac{j(t)}{t} \rightarrow \pi_p,
        \]
        where $(\pi_f, \pi_p)$ is the stationary distributions under the transition matrix \( P \). Here, \( P \) governs the transition of \(\delta_t\) when the fee is fixed at the optimal levels, i.e., \( g_t \equiv f^* \) for \( \delta_t^* = 1 \) and \( g_t \equiv p^* \) for \( \delta_t^* = 0 \).
        \item If $f^* \neq p^*$, as the number of observations $\kappa \rightarrow \infty$, the stationary distribution 
        $$(\pi_f, \pi_p) \rightarrow (\mathbf{1}\{f^* > p^*\}, \mathbf{1}\{f^* < p^*\}).$$ 
        As $\kappa \rightarrow \infty$, the convergence rate 
        $$|\pi_f-\mathbf{1}\{f^* > p^*\}| = |\pi_p-\mathbf{1}\{f^* < p^*\}| = O(1/\sqrt{\kappa}). $$
    \end{enumerate}
\end{theorem}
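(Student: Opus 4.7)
The overall plan is to combine stochastic approximation theory for the two fee sub-sequences with an ergodic analysis of the binary selection process $\delta_t$. Proposition~\ref{renewal-thm}, combined with the Corollary to Property~\ref{monotonic-posting} and Theorems~\ref{existence-bb}--\ref{existence-cc}, supplies the structural ingredients needed: the drifts $\mathbb{E}[X(t;g)]$ and $\mathbb{E}[Y(t;g)]$ are, up to the positive factor $\mathbb{E}[\tau(g)]$, strictly monotone in $g$ on $[0,\lambda_0/(2k)]$ with unique zeros $f^*$ and $p^*$ respectively. This is the classical Robbins--Monro setting, provided both sub-sequences are updated infinitely often.

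I would first establish part~(1) by contradiction. If $\delta_t = 1$ for all $t \geq T$, only budget-balance updates occur, so the Robbins--Monro argument invoked below forces $g_t \to f^*$. The non-switching condition $\sum_s Y(s;g_t) \geq 0$ must then hold for all $t \geq T$. Since $Y(\cdot;g)$ has strictly positive variance under the Poisson arrival model and the running fee converges, the event $\{\sum_s Y(s;g_t) < 0\}$ has probability bounded away from zero infinitely often; a conditional Borel--Cantelli argument then forces it to occur almost surely, contradicting the hypothesis. The symmetric case handles $\delta_t = 0$. Part~(2) then follows by applying the Robbins--Monro convergence theorem to each sub-sequence, using step sizes $a_i = a/(i+1)$ and $b_j = b/(j+1)$ (which satisfy $\sum_i a_i = \infty$, $\sum_i a_i^2 < \infty$), the strict monotone drift around the unique root, the compactness of the projection interval (bounded noise), and the i.i.d.\ gas-price assumption to secure a martingale-difference noise sequence.

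For part~(3), once $g_{\zeta(t)} \to f^*$ and $g_{\eta(t)} \to p^*$, the transition kernel driving $\delta_t$ converges to the fixed two-state chain $P$ with entries $P_{11} = \mathbb{P}\bigl(\sum_{s=1}^\kappa Y(s;f^*) \geq 0\bigr)$, $P_{01} = \mathbb{P}\bigl(\sum_{s=1}^\kappa X(s;p^*) < 0\bigr)$, and their complements. Applying an ergodic theorem for time-inhomogeneous Markov chains with a converging kernel yields $i(t)/t \to \pi_f = P_{01}/(P_{01}+P_{10})$ and $j(t)/t \to \pi_p$. For part~(4), assume without loss of generality that $f^* > p^*$; then $\mathbb{E}[Y(s;f^*)] = kf^* - (\lambda_0 - \bar\lambda) > 0$ and, by the Corollary, $\mathbb{E}[X(s;p^*)] < 0$. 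The CLT applied to the $\kappa$-sample sums gives $P_{10}, P_{00} \to 0$ and $P_{11}, P_{01} \to 1$, so $\pi_f \to 1$. The $O(1/\sqrt{\kappa})$ rate then follows from a Berry--Esseen or Chebyshev tail bound on the sign of the sample sums combined with a first-order expansion of $\pi_f = P_{01}/(P_{01}+P_{10})$ around $(P_{01},P_{10}) = (1,0)$.

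The main obstacle will be the coupling between the two stochastic approximation sequences via the $\delta_t$-driven selection rule. The update indices $i(t)$ and $j(t)$ are themselves random, so one cannot directly cite a clean Robbins--Monro theorem without first proving (as in part~1) that they both tend to infinity almost surely, and then choosing a filtration under which the noise along each random sub-sequence remains a martingale difference. The renewal-theoretic structure from Proposition~\ref{renewal-thm} (with random posting lengths $\tau(g)$) further introduces dependence within each $X(t;g_t)$ and $Y(t;g_t)$ that has to be accommodated in the variance bound underpinning the stochastic approximation argument.
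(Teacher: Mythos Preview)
Your overall architecture matches the paper's: establish infinite visitation of both regimes first, apply Robbins--Monro stochastic approximation along each fee sub-sequence, run an ergodic argument for the empirical frequencies $i(t)/t$, and finish with CLT plus Berry--Esseen for the $\kappa$-asymptotics. Parts~(2) and~(4) of your plan are essentially what the paper does, including the use of the queue-emptying (strong Markov) structure to get a clean martingale-difference noise and the explicit formula $\pi_f = P_{01}/(P_{01}+P_{10})$.

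Two places need tightening. In Part~(1) you argue by contradiction: if $\delta_t = 1$ for all $t \geq T$, then Robbins--Monro forces $g_t \to f^*$, after which Borel--Cantelli produces a switch. The first step is delicate, because the event $\{\delta_t = 1\ \forall\, t \geq T\}$ is not $\mathcal{F}_T$-measurable, and conditioning on it biases the arrival process that drives both $X$ and $Y$; the martingale-difference hypothesis of Robbins--Monro is not obviously preserved under this conditioning. The paper avoids the detour entirely: it bounds $\sup_{g \in [0,\lambda_0/(2k)]} \mathbb{P}\bigl(\sum_s X(s;g) \geq 0\bigr) \leq 1-\rho < 1$ directly by compactness and continuity in $g$, and then computes $\mathbb{P}(\delta_t = 0\ \forall\, t > M) \leq \prod_{t}(1-\rho) = 0$. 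Your conditional Borel--Cantelli conclusion is the right endgame, but the uniform bound over the compact fee interval is what actually drives it, and it makes the Robbins--Monro step in this part unnecessary.

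In Part~(3), ``an ergodic theorem for time-inhomogeneous Markov chains with a converging kernel'' is not a result you can cite off the shelf; this is where the paper does the most hands-on work. It builds $\{\delta_t\}$ and an auxiliary stationary two-state chain with a perturbed transition matrix $\hat P$ on a common array of i.i.d.\ $\mathrm{Unif}[0,1]$ variables, so that the return times to state~$1$ are ordered pathwise ($\hat t_l \leq t_l^{(f)}$ once the fees are close enough to $(f^*,p^*)$). Continuity of the stationary distribution in the transition probabilities then gives a sandwich bound on $\lim_n n^{-1}\sum_l t_l^{(f)}$, and letting the perturbation shrink yields $i(t)/t \to \pi_f$. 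You should expect to carry out a comparable coupling rather than invoke a black box.
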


\begin{proof}
    The proof is provided in Appendix \ref{proofs}.  \qed
\end{proof}

According to Theorem \ref{fee-convergence-thm}, the sensitivity of our adaptive fee selection mechanism to network fluctuations can be moderated by adjusting the parameter $\kappa$, which determines the number of postings observed before a fee update. By increasing $\kappa$, the mechanism aggregates more data across multiple L1 block intervals before reconsidering fee changes. This extended observation window tends to smooth out short-term volatility and reduces the frequency of switching between fee update rules, thereby decreasing the likelihood of erratic fee adjustments driven by transient network effects.

In practical terms, the fee mechanism described in Section \ref{practical-mechanism} sets $\kappa=1$ because real-world systems are inherently non-stationary. Updating the fee for each posting, similar to how EIP-1559 updates fees for each block, allows the mechanism to quickly adapt to changing network conditions and improve robustness.
Updating fees after observing a larger set of postings (i.e., increasing $\kappa$) presents a trade-off. On one hand, it improves stability by reducing sensitivity to short-term fluctuations, leading to more consistent fee decisions. On the other hand, it may diminish the mechanism’s responsiveness to rapid changes in network conditions. This reduced agility may compromise robustness, as fees might not adapt swiftly to evolving demand patterns during extended observation periods.

\section{Simulation}

In this section, we empirically evaluate our L2 fee mechanism through a series of simulations designed to test its performance under varying conditions. Initially corroborating the theoretical model, we progressively introduce more realistic scenarios shifting from i.i.d., which assumes the L1 gas fees are independent and identically distributed across time, providing a simplified theoretical setting, to non-i.i.d., which models the fees via a mean-reverting process to capture temporal dependence and better reflect real-world dynamics, price distributions and from decreasing to constant step sizes. These simulations aim to assess whether the fees \( f_n \) and \( p_n \) stay close to their theoretical optimal values \( f^* \) and \( p^* \) under less idealized conditions, providing insight into the mechanism's robustness and practical applicability.

\subsection{Simulation Setup}

We conduct simulations to evaluate the performance of our L2 fee mechanism across four distinct scenarios, which vary by the nature of L1 gas fee conditions, either i.i.d. or non-i.i.d., and the approach of step sizes, either decreasing or constant. For each scenario, we assess how well the fee mechanism approaches the theoretical optimal fees \( f^* \) and \( p^* \), as well as analyze the long-term frequency of selecting either the budget balance or congestion control fee update mechanisms. These simulations are crucial for understanding how well the mechanism can maintain its efficiency and effectiveness in different market conditions.

\medskip
\noindent \textbf{Price Generation.} For i.i.d. scenarios, prices are generated from a predefined normal distribution. For non-i.i.d. scenarios, prices are generated using the mean-reverting AR(1) process, as defined in Section \ref{model}.

\medskip
\noindent
\textbf{Step Size Configuration:} For decreasing step sizes, we use \( a_t = a / t \) and \( b_t = b / t \), where \( a \) and \( b \) are predefined. For constant step sizes, \( a_t \) and \( b_t \) are fixed throughout the simulation.

\medskip
\noindent \textbf{Parameters Setup.} Our parameters are calibrated as follows:
\begin{itemize}
    \item \textbf{L1 gas fee parameter.} We use the L1 base fee data during the second half of February 2024 to get the mean and variance. Then for the i.i.d. case, we set its mean $\mu = 3.86 \times 10^{-8}$ and standard deviation $\upsilon = 1.93 \times 10^{-8}$. For the non-i.i.d. case, we set $\mu = 3.86 \times 10^{-8}$, $\theta = 0.1$ and $\sigma = 8.41 \times 10^{-9}$.
    \item \textbf{Arrival rate parameter.} The typical L2 transaction fee is around $0.09$ USD, equivalent to $3.6 \times 10^{-5}$ ETH at an ETH price of $2500$ USD. Some historical data shows an average of $120$ transactions per L1 block time before EIP-4844. We model the transaction decline by setting a fee cap at $0.27$ USD, or $1.08 \times 10^{-4}$ ETH, above which no transactions occur. This setup defines a linear arrival rate model with parameters $\lambda_0 = 180$ and $k = 1.67 \times 10^6$, based on these two fee thresholds.
    \item \textbf{Single observation.} We set $\kappa = 1$, which means that for each fee update, we only consider the current posting to determine the fee update mechanism.
\end{itemize}

\subsection{Simulation Results}

\noindent \textbf{Decreasing step size and i.i.d. L1 fee case.} The simulation validates the theoretical results for decreasing step sizes with i.i.d. L1 fee, as shown in Appendix \ref{fig-de-iid}. Each graph demonstrates the corresponding parameter converges in the long run. These outcomes support the use of the decreasing step size approach in achieving target efficiency in a stationary environment.

\medskip
\noindent \textbf{Decreasing step size and non-i.i.d. L1 fee case.} In the absence of a formal theorem for almost surely convergence of the sequence $f_t$, $p_t$, $i(t)/t$ and $j(t)/t$ under non-i.i.d. conditions, our simulations serve as a critical empirical test. The results, as shown in Appendix \ref{fig-de-non}, indicate that despite the complexity introduced by the non-i.i.d. nature of L1 fees, all of these parameters still demonstrate a tendency towards convergence over an extended period. This performance suggests that our fee mechanism is robust even in more complicated and realistic L1 gas fee conditions. 

\medskip
\noindent \textbf{Constant step size and non-i.i.d. L1 fee case.} This scenario arguably presents the most practical and realistic conditions for the implementation of our L2 fee mechanism. While we lack a formal convergence theorem for this case, the simulation results, as shown in Appendix \ref{fig-con-non}, are promising, showing that the fees \( f_n \) and \( p_n \) tend to stabilize close to $f^*$ and $p^*$ over time. The histograms indicate that this stabilization occurs within a reasonable neighborhood given the simulation environment. This observation aligns with the theorems presented in Chapter 8 of \cite{harold1997stochastic}, which discusses the convergence behavior for constant step-size cases, indicating asymptotic convergence within a neighborhood of the root. Moreover, the simulation indicates that the relative frequencies \( i(t)/t \) and \( j(t)/t \) of selecting different update mechanisms appear to converge. Although almost surely convergence may not hold in highly stable environments, the robustness and adaptability of our fee mechanism are essential for effectively managing the dynamic and often unstable conditions encountered in real-world applications.

\section{Conclusion}



This paper presents a framework to optimize transaction posting and dynamic fee mechanisms in L2 blockchain systems. By integrating models for transaction arrivals, queue dynamics, and cost structures, we showed that an optimal threshold policy can dictate when to post transactions to L1, balancing operational costs and system performance.

On the pricing side, we developed an L2 fee mechanism that achieves both budget balance and congestion control. Our analysis established the existence and uniqueness of fee levels that satisfy these objectives and examined the convergence properties of the adaptive update mechanism. Simulations confirm that the method efficiently adjusts fees to maintain stability and manage congestion.

Overall, our work bridges transaction posting strategies and fee design, offering a principled method that enhances operational efficiency, financial viability, and congestion management. This integrated perspective paves the way for more resilient and scalable L2 platforms and opens avenues for future research. For example, extending the framework to include blob transactions and other complex settings could further refine our approach to handle diverse transaction types and evolving network demands.

\bibliographystyle{plainurl}
\bibliography{name}

\begin{thebibliography}{10}

\bibitem{bar2023optimal}
Yogev Bar-On and Yishay Mansour.
\newblock Optimal publishing strategies on a base layer.
\newblock {\em arXiv preprint arXiv:2312.06448}, 2023.

\bibitem{basu2019towards}
Soumya Basu, David Easley, Maureen O'Hara, and Emin~G{\"u}n Sirer.
\newblock Towards a functional fee market for cryptocurrencies.
\newblock {\em arXiv preprint arXiv:1901.06830}, 2019.

\bibitem{bousfield2022arbitrum}
Lee Bousfield, Rachel Bousfield, Chris Buckland, Ben Burgess, Joshua Colvin, Edward~W Felten, Steven Goldfeder, Daniel Goldman, Braden Huddleston, H~Kalonder, et~al.
\newblock Arbitrum nitro: A second-generation optimistic rollup, 2022.

\bibitem{crapis2023eip4844}
Davide Crapis.
\newblock Eip-4844 fee market analysis.
\newblock \url{{https://ethresear.ch/t/eip-4844-fee-market-analysis/15078}}, March 2023.

\bibitem{crapis2023eip}
Davide Crapis, Edward~W Felten, and Akaki Mamageishvili.
\newblock Eip-4844 economics and rollup strategies.
\newblock {\em arXiv preprint arXiv:2310.01155}, 2023.

\bibitem{multidimensional}
Davide Crapis, Ciamac~C. Moallemi, and Shouqiao Wang.
\newblock Optimal dynamic fees for blockchain resources.
\newblock {\em CoRR}, abs/2309.12735, 2023.

\bibitem{harold1997stochastic}
J~Harold, G~Kushner, and George Yin.
\newblock Stochastic approximation and recursive algorithm and applications.
\newblock {\em Application of Mathematics}, 35(10), 1997.

\bibitem{katti1968handbook}
SK~Katti and A~Vijaya Rao.
\newblock Handbook of the poisson distribution, 1968.

\bibitem{lavi2022redesigning}
Ron Lavi, Or~Sattath, and Aviv Zohar.
\newblock Redesigning bitcoin’s fee market.
\newblock {\em ACM Transactions on Economics and Computation}, 10(1):1--31, 2022.

\bibitem{opt_chaos}
Stefanos Leonardos, Dani{\"{e}}l Reijsbergen, Barnab{\'{e}} Monnot, and Georgios Piliouras.
\newblock Optimality despite chaos in fee markets.
\newblock {\em CoRR}, abs/2212.07175, 2022.

\bibitem{little2008little}
John~DC Little and Stephen~C Graves.
\newblock Little's law.
\newblock {\em Building intuition: insights from basic operations management models and principles}, pages 81--100, 2008.

\bibitem{eff_batch_posting}
Akaki Mamageishvili and Edward~W. Felten.
\newblock Efficient {Rollup} batch posting strategy on {Base Layer}.
\newblock {\em CoRR}, abs/2212.10337, 2022.
\newblock \href {https://arxiv.org/abs/2212.10337} {\path{arXiv:2212.10337}}, \href {https://doi.org/10.48550/arXiv.2212.10337} {\path{doi:10.48550/arXiv.2212.10337}}.

\bibitem{meister2024gas}
Bernhard~K Meister and Henry~CW Price.
\newblock Gas fees on the ethereum blockchain: from foundations to derivative valuations.
\newblock {\em Frontiers in Blockchain}, 7:1462666, 2024.

\bibitem{puterman2014markov}
Martin~L Puterman.
\newblock {\em Markov decision processes: discrete stochastic dynamic programming}.
\newblock John Wiley \& Sons, 2014.

\bibitem{roughgarden2020transaction}
Tim Roughgarden.
\newblock Transaction fee mechanism design for the ethereum blockchain: An economic analysis of eip-1559.
\newblock {\em arXiv preprint arXiv:2012.00854}, 2020.

\bibitem{pantera2024pricing}
Matt Stephenson and Ally Zach.
\newblock Getting the pricing right in crypto, 2024.
\newblock URL: \url{https://panteracapital.com/research-getting-the-pricing-right-in-crypto/}.

\bibitem{yaish2023correct}
Aviv Yaish and Aviv Zohar.
\newblock Correct cryptocurrency asic pricing: Are miners overpaying?
\newblock In {\em 5th Conference on Advances in Financial Technologies (AFT 2023)}. Schloss Dagstuhl-Leibniz-Zentrum f{\"u}r Informatik, 2023.

\bibitem{yao2018incentive}
Andrew Chi-Chih Yao.
\newblock An incentive analysis of some bitcoin fee designs.
\newblock {\em arXiv preprint arXiv:1811.02351}, 2018.

\end{thebibliography}

\newpage

\appendix

\section{Proofs} \label{proofs}

\medskip
\noindent\textbf{\sffamily Proof of Theorem \ref{threshold-policy}.}
\begin{proof}
Consider the objective function
\[
J\left( Q_{t},P_{t}\right) =\min_{\left\{ S_{s}\right\} }\mathbb{E}\left[
\sum_{s\geq t}\gamma ^{s-t}\left[ a\left( Q_{s}-S_{s}\right) +\left(
b_{0}+b_{1}S_{s}\right) P_{s}\mathbf{1}{\left\{ S_{s}>0\right\} }\right] \right].
\]%

We first show that when the state is $\left( Q_{t},P_{t}\right) $, we have either $S_{t}^{\ast }=0$ or $S_{t}^{\ast }=Q_{t}$. We then show the following Lemmas.

\begin{lemma} \label{concave-J}
    For any feasible $Q,P$, we have 
    $$2J\left( Q+1,P\right) \geq J\left(Q,P\right) +J\left( Q+2,P\right). $$
\end{lemma}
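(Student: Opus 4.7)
The plan is to establish the discrete concavity by value iteration, leveraging the binary optimal-action property $S^{\star}\in\{0,Q\}$ already shown in the earlier part of the theorem's proof. I define the iterates $J_0\equiv 0$ and $J_{n+1}=T J_n$, where $T$ is the Bellman operator restricted (using the binary property) to the action set $\{0,Q\}$. Since $\gamma<1$, $T$ is a contraction and $J_n\to J$ pointwise; the concavity inequality is preserved under pointwise limits on the integer lattice, so it suffices to prove $2J_n(Q+1,P)\geq J_n(Q,P)+J_n(Q+2,P)$ for every $n$, $Q$, and $P$. The base case $n=0$ is immediate.

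For the inductive step, fix $(Q,P)$ and let $S^{\star}$ attain the minimum defining $J_{n+1}(Q+1,P)$; by the binary property $S^{\star}\in\{0,Q+1\}$, and in each case I use a suboptimal-action coupling at the neighbouring states. If $S^{\star}=0$, I take action $S=0$ at both $Q$ and $Q+2$: the immediate holding costs obey $aQ+a(Q+2)=2a(Q+1)$, and the inductive hypothesis applied at the random state $Q+A$ gives $J_n(Q+A,P')+J_n(Q+2+A,P')\leq 2J_n(Q+1+A,P')$, so the expected continuation costs also sum to at most twice the value at $Q+1$. If $S^{\star}=Q+1$, I take $S=Q$ at $Q$ and $S=Q+2$ at $Q+2$: both queues empty, so the continuations coincide and contribute exactly $2\gamma\,\mathbb{E}[J_n(A,P')]$, while the posting costs satisfy $(b_0+b_1 Q)P+(b_0+b_1(Q+2))P=2(b_0+b_1(Q+1))P$. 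In both cases $J_{n+1}(Q,P)+J_{n+1}(Q+2,P)\leq 2J_{n+1}(Q+1,P)$, closing the induction, and the result follows by passing to the limit.

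The main obstacle I anticipate is the interdependence between concavity and the binary-action property: the latter is typically derived from the former, because if $J_n$ is concave in $Q$ then the one-step-plus-continuation objective is linear-plus-concave in $S$ on $\{1,\ldots,Q\}$ (the posting cost is linear in $S$ and $J_n(Q-S+A)$ is concave in $S$), forcing the minimum to an endpoint and hence to $\{0,Q\}$. I would therefore run a joint induction that at each iterate $n$ maintains both (i) concavity of $J_n$ in $Q$ and (ii) the binary property for the minimiser defining $J_{n+1}$, so that the interchange comparison above always uses feasible and structurally correct actions at the neighbouring states.
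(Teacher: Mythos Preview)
Your value-iteration approach with the joint induction you outline at the end is correct, but it takes a more circuitous route than the paper and leans on a property you do not actually need. In the paper's logical order the lemma comes \emph{first} and is the input from which the binary-action structure is subsequently derived, so invoking that structure as ``already shown'' is backwards; you correctly diagnose this, and your joint induction (run with the \emph{full} Bellman operator, not the restricted one, so that $J_n\to J$) does repair it. But the dependence on the binary property is artificial: in your inductive step the same coupling works for \emph{any} minimiser $S^\star\ge 1$, not only $S^\star=Q+1$. Taking $S=S^\star-1$ at state $Q$ and $S=S^\star+1$ at state $Q+2$ sends all three continuations to the common state $Q+1-S^\star+A$, the holding costs agree exactly, and the posting costs satisfy $(b_0\mathbf{1}_{\{S^\star>1\}}+b_1(S^\star-1))P+(b_0+b_1(S^\star+1))P\le 2(b_0+b_1 S^\star)P$, so concavity propagates without ever mentioning thresholds.

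The paper pushes this one step further and dispenses with value iteration altogether. It argues directly on the infinite-horizon $J$ by a sample-path coupling: fix a realisation of $\{(A_s,P_s)\}_{s\ge t}$, let $\{S_s^\star\}$ be optimal from initial queue $Q+1$, and let $u$ be the first time with $S_u^\star>0$. Build feasible policies for initial queues $Q$ and $Q+2$ by copying $\{S_s^\star\}$ everywhere except at time $u$, where one posts $S_u^\star-1$ and $S_u^\star+1$ respectively. After time $u$ all three queues coincide, so the pathwise cost comparison collapses to the single step $u$ and yields the inequality immediately; taking expectations finishes. This is shorter, needs no contraction or limit argument, and makes concavity logically prior to the threshold policy rather than entangled with it.
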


\begin{proof}

Suppose for any given sequence $\left\{
(A_{s},P_s)\right\} _{s\geq t}$, the optimal strategy for $J\left( Q+1\right) $ is 
$\left\{ S_{s}^{\ast }\right\} _{s\geq t}$, and $u=\min \left\{ \left. s\geq
t\right\vert S_{s}^{\ast }>0\right\} $. We construct the following strategy
for $J\left( Q\right) $ and $J\left( Q+2\right) $. For $J\left( Q\right) $,
we let
\[
S_{s}^{Q}=\left\{ 
\begin{array}{cc}
S_{s}^{\ast }, & \text{when }s\neq u, \\ 
S_{u}^{\ast }-1, & \text{when }s=u.%
\end{array}%
\right. 
\]%
For $J\left( Q+2\right) $, we let
\[
S_{s}^{Q+2}=\left\{ 
\begin{array}{cc}
S_{s}^{\ast }, & \text{when }s\neq u, \\ 
S_{u}^{\ast }+1, & \text{when }s=u.%
\end{array}%
\right. 
\]%
Define the random variable
$$J_\omega (Q,P) = \sum_{s\geq t}\gamma ^{s-t}\left[ a\left( Q_{s}-S_{s}\right) +\left(
b_{0}+b_{1}S_{s}\right) P_{s}\mathbf{1}{\left\{ S_{s}>0\right\} }\right] \Big| (Q_t, P_t) = (Q,P).$$
Then we can show that 
\begin{eqnarray*}
&&2J_\omega\left( \left. Q+1,P\right\vert \left\{ (A_{s},P_s)\right\} _{s\geq t}\right)  \\
&=&2J_\omega\left( \left. Q+1,P\right\vert \left\{ (A_{s},P_s)\right\} _{s\geq t},\left\{
S_{s}^{\ast }\right\} _{s\geq t}\right)  \\
&\geq &J_\omega\left( \left. Q,P\right\vert \left\{ (A_{s},P_s)\right\} _{s\geq t},\left\{
S_{s}^{Q}\right\} _{s\geq t}\right) \\
& & +J_\omega\left( \left. Q+2,P\right\vert \left\{
(A_{s},P_s)\right\} _{s\geq t},\left\{ S_{s}^{Q+2}\right\} _{s\geq t}\right)  \\
&\geq &J_\omega\left( \left. Q,P\right\vert \left\{ (A_{s},P_s)\right\} _{s\geq t}\right)
+J_\omega\left( \left. Q+2,P\right\vert \left\{ (A_{s},P_s)\right\} _{s\geq t}\right) ,
\end{eqnarray*}%
where the first inequality holds because%
\begin{eqnarray*}
&& 2c\left( \left. Q_{s},S_{s}^{\ast },P\right\vert Q_{t}=Q+1\right) \\
&=& c\left(
\left. Q_{s},S_{s}^{Q},P\right\vert Q_{t}=Q\right) +c\left( \left.
Q_{s},S_{s}^{Q+2},P\right\vert Q_{t}=Q+2\right) 
\end{eqnarray*}
for all $s\neq u$ and
\begin{eqnarray*}
&& 2c\left( \left. Q_{u},S_{u}^{\ast },P\right\vert Q_{t}=Q+1\right) \\
&\geq& c\left( \left. Q_{u},S_{u}^{Q},P\right\vert Q_{t}=Q\right) +c\left( \left.
Q_{u},S_{u}^{Q+2},P\right\vert Q_{t}=Q+2\right) .
\end{eqnarray*}
Hence, we have
\begin{align*}
    2J(Q+1,P) &= \mathbb{E}[2J_\omega(Q+1,P) \vert \left\{ (A_{s},P_s)\right\}_{s\geq t}] \\
    &\leq \mathbb{E}[J_\omega(Q,P)\vert \left\{ (A_{s},P_s)\right\}_{s\geq t}] + \mathbb{E}[J_\omega(Q+2,P)\vert \left\{ (A_{s},P_s)\right\}_{s\geq t}] \\
    &= J(Q,P)+J(Q+2,P).
\end{align*}
    
\end{proof}

By Bellman equation, we have
\begin{eqnarray*}
& & J\left( Q_{t},P_{t}\right) \\
&=&\min_{0\leq S_{t}\leq Q_{t}}a\left(
Q_{t}-S_{t}\right) +\left( b_{0}+b_{1}S_{t}\right) P_{t} \mathbf{1}{\left\{
S_{t}>0\right\} }  \\
& &+\gamma \mathbb{E}\left[ J\left(
Q_{t}-S_{t}+A_{t},P_{t+1}\right) |Q_{t}-S_{t},P_{t}\right]  \\
&\triangleq &\min_{0\leq S_{t}\leq Q_{t}}J^{Q}\left( Q_{t},P_{t},S_{t}\right) .
\end{eqnarray*}%

\begin{lemma} \label{Q-function-diffs}
    If $J^{Q}\left( Q_{t},P_{t},S_{t}+1\right)<J^{Q} \left( Q_{t},P_{t},S_{t}\right) $ holds, 
    then we always have $J^{Q}\left( Q_{t},P_{t},S_{t}+2\right) <J^{Q }\left(Q_{t},P_{t},S_{t}+1\right) $. 
\end{lemma}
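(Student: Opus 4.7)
The plan is to write the one‑step differences $\Delta(S_t) := J^{Q}(Q_t,P_t,S_t+1) - J^{Q}(Q_t,P_t,S_t)$ explicitly via the Bellman form used to define $J^Q$, and show that $\Delta$ is itself monotone non‑increasing in $S_t$, so that once $\Delta(S_t) < 0$ every subsequent difference is also negative. The only inputs I would need are the definition of $J^Q$ and the discrete concavity of $J$ in the queue length established in Lemma~\ref{concave-J}.

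First I would handle the generic case $S_t \geq 1$, in which the indicator is already active for both $S_t$ and $S_t+1$. Direct subtraction from the Bellman expression gives
\[
\Delta(S_t) \;=\; -a + b_1 P_t + \gamma\, \mathbb{E}\!\left[\,J(Q_t-S_t-1+A_t,P_{t+1}) - J(Q_t-S_t+A_t,P_{t+1})\,\right].
\]
Setting $q := Q_t - S_t + A_t$, Lemma~\ref{concave-J} rearranges to $J(q-2,P) - J(q-1,P) \leq J(q-1,P) - J(q,P)$ for every realization of $A_t$ and $P_{t+1}$, so taking expectations shows the future‑cost term in $\Delta(S_t+1)$ is no larger than that in $\Delta(S_t)$. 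The linear pieces $-a + b_1 P_t$ are identical, hence $\Delta(S_t+1) \leq \Delta(S_t) < 0$, which is exactly the claim.

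Second I would address the boundary $S_t = 0$, where the indicator flips on in moving from $S_t = 0$ to $S_t = 1$ and introduces an additional $b_0 P_t \geq 0$:
\[
\Delta(0) \;=\; b_0 P_t + \Bigl(-a + b_1 P_t + \gamma\, \mathbb{E}\!\left[J(Q_t-1+A_t,P_{t+1}) - J(Q_t+A_t,P_{t+1})\right]\Bigr).
\]
If $\Delta(0) < 0$, the bracketed quantity is strictly less than $-b_0 P_t \leq 0$. Now $\Delta(1)$ has no $b_0 P_t$ term, and by the same concavity argument its future‑cost component is bounded above by the bracketed quantity. Therefore
\[
\Delta(1) \;\leq\; -a + b_1 P_t + \gamma\, \mathbb{E}\!\left[J(Q_t-1+A_t,P_{t+1}) - J(Q_t+A_t,P_{t+1})\right] \;<\; -b_0 P_t \;\leq\; 0.
\]

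The only non‑routine point, which I would flag as the one worth writing carefully, is precisely this $S_t = 0 \to 1$ jump: the indicator‑induced $b_0 P_t$ could a priori make $\Delta(0)$ negative while leaving $\Delta(1)$ nonnegative. The calculation above rules this out by absorbing the gap $b_0 P_t$ into the strict negativity of the inner expression, after which concavity of $J$ in $Q$ carries the inequality forward. Once past $S_t = 1$, the argument is uniform in $S_t$ and reduces entirely to applying Lemma~\ref{concave-J} inside the expectation, so no further case analysis is needed.
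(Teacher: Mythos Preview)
Your proposal is correct and follows essentially the same route as the paper: write out the successive differences of $J^Q$ in $S_t$, and reduce the comparison to the discrete concavity of $J$ in the queue variable (Lemma~\ref{concave-J}) inside the expectation. The only cosmetic difference is that the paper keeps the indicator term $b_0\mathbf{1}_{\{S_t=0\}}P_t$ in the hypothesis inequality and handles all $S_t$ at once, whereas you split off the boundary case $S_t=0$; both reach the same concavity inequality.
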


\begin{proof}
After plugging in the definition of $J^Q$, this Lemma is equivalent to if%
\begin{align*}
&-a + \left( b_{0} 1_{\left\{ S_{t} = 0 \right\}} + b_{1} \right) P_{t} 
-\gamma ( 
\mathbb{E} \big[ J\left( Q_{t} - S_{t} + A_{t}, P_{t+1} \right) \\
&\quad - J\left( Q_{t} - S_{t} - 1 + A_{t}, P_{t+1} \right) 
\big| Q_{t} - S_{t}, P_{t} \big] 
) \leq 0,
\end{align*}
then we have%
\begin{align*}
&-a + b_{1} P_{t} 
- \gamma \left( 
\mathbb{E} \big[ J\left( Q_{t} - S_{t} + A_{t} - 1, P_{t+1} \right) \right. \\
&\quad \left. - J\left( Q_{t} - S_{t} - 2 + A_{t}, P_{t+1} \right) 
\big| Q_{t} - S_{t}, P_{t} \big] \right) \leq 0.
\end{align*}
Comparing these two inequalities, we only need to prove that
\begin{align*}
& -a + b_{1} P_{t} 
- \gamma \left( \mathbb{E} \big[ J\left( Q_{t} - S_{t} + A_{t} - 1, P_{t+1} \right) \right. \\
& \quad \left. - J\left( Q_{t} - S_{t} - 2 + A_{t}, P_{t+1} \right) 
\big| Q_{t} - S_{t}, P_{t} \big] \right) \\
\leq \, & -a + \left( b_{0} 1_{\left\{ S_{t} = 0 \right\}} + b_{1} \right) P_{t} 
- \gamma \left( \mathbb{E} \big[ J\left( Q_{t} - S_{t} + A_{t}, P_{t+1} \right) \right. \\
& \quad \left. - J\left( Q_{t} - S_{t} - 1 + A_{t}, P_{t+1} \right) 
\big| Q_{t} - S_{t}, P_{t} \big] \right),
\end{align*}
which means we only need to show that
\begin{align*}
&\mathbb{E} \big[ 2J\left( Q_{t} - S_{t} + A_{t} - 1, P_{t+1} \right) 
- J\left( Q_{t} - S_{t} + A_{t}, P_{t+1} \right) \\
&\quad - J\left( Q_{t} - S_{t} - 2 + A_{t}, P_{t+1} \right) 
\big| Q_{t} - S_{t}, P_{t} \big] \geq 0.
\end{align*}
This can be directly derived from the Lemma \ref{concave-J}.

\end{proof}

By the Lemma \ref{Q-function-diffs}, we know that if for any feasible $S_t$, we have 
$$J^{Q}\left( Q_{t},P_{t},S_{t}+1\right) \geq J^{Q} \left( Q_{t},P_{t},S_{t}\right),$$
then the optimal policy is $S_t^*=0$. Otherwise, suppose the minimum feasible $S_t$ such that 
$$J^{Q}\left( Q_{t},P_{t},S_{t}+1\right) < J^{Q} \left( Q_{t},P_{t},S_{t}\right)$$ is $S$, which also means that
$$J^{Q} \left( Q_{t},P_{t},0\right) \leq J^{Q} \left( Q_{t},P_{t},1\right) \leq \cdots \leq J^{Q} \left( Q_{t},P_{t},S\right).$$
Then we iteratively use the Lemma to get
$$J^{Q} \left( Q_{t},P_{t},S\right) > J^{Q} \left( Q_{t},P_{t},S+1\right) > \cdots > J^{Q} \left( Q_{t},P_{t},Q_{t}\right).$$
Combining these two inequalities, we can get
$$\min_{0\leq S_{t}\leq Q_{t}}J^{Q}\left( Q_{t},P_{t},S_{t}\right) = \min\{J^{Q} \left( Q_{t},P_{t},0\right), J^{Q} \left( Q_{t},P_{t},Q_{t}\right)\},$$
which means that the optimal value $S_t^*$ is either $0$ or $Q_t$.

We have already proved that the optimal posting strategy $S_t^*$ is either $0$ or $Q_t$. We still need to prove the existence of $Q^*(P_t)$. Then we only need to prove that if $S_t^*(Q_t,P_t) = Q_t$, we always have $S_t^*(Q_t+1,P_t) = Q_t+1$, because in this way, the threshold 
$$Q^*(P_t) = \min \{Q: S_t^*(Q,P_t) = Q\}.$$
The condition of $S_t^*(Q_t,P_t) = Q_t$ means that 
$$J^Q(Q_t,P_t,0) \geq J^Q(Q_t,P_t,Q_t).$$
This is equivalent to
\begin{align*}
    aQ_t+\gamma \mathbb{E}\left[ J\left(Q_{t}+A_{t},P_{t+1}\right) |Q_{t},P_{t}\right] 
    \geq (b_0+b_1Q_t)P_t + \gamma \mathbb{E}\left[ J\left(A_{t},P_{t+1}\right) |P_{t}\right].
\end{align*}
By Lemma \ref{concave-J}, given $Q_t$, we can show that $\forall A_t = A, P_{t+1}$, there is
$$J\left(Q_{t}+A,P_{t+1}\right)-J\left(A,P_{t+1}\right) \leq \frac{Q_t}{Q_t+1} \left(J\left(Q_{t}+1+A,P_{t+1}\right)-J\left(A,P_{t+1}\right)\right),$$
which means that
\begin{eqnarray*}
&&\mathbb{E} \big[ J\left( Q_{t} + A, P_{t+1} \right) - J\left( A, P_{t+1} \right) 
\big| Q_{t} \big] \\
&\leq& \frac{Q_{t}}{Q_{t} + 1} 
\mathbb{E} \big[ J\left( Q_{t} + 1 + A, P_{t+1} \right) - J\left( A, P_{t+1} \right) 
\big| Q_{t} \big].
\end{eqnarray*}
Thus, we can derive
\begin{eqnarray*}
    && \gamma \mathbb{E}\left[J\left(Q_{t}+1+A,P_{t+1}\right)-J\left(A,P_{t+1}\right) | Q_t\right] \\
    &\geq& \frac{Q_t+1}{Q_t} \gamma \mathbb{E}[J\left(Q_{t}+A,P_{t+1}\right)-J\left(A,P_{t+1}\right) | Q_t] \\
    &\geq& \frac{Q_t+1}{Q_t} (-aQ_t + (b_0+b_1Q_t)P_t) \\
    &=& (Q_t+1) \left(-a + \left( \frac{b_0}{Q_t}+b_1 \right)P_t \right) \\
    &\geq& (Q_t+1)\left(-a + \left( \frac{b_0}{Q_t+1}+b_1 \right)P_t \right) \\
    &=& -a(Q_t+1) + (b_0 + b_1(Q_t+1)P_t).
\end{eqnarray*}
By rearranging the inequality and Bellman equation, we can get
$$J^Q(Q_t+1,P_t,0) \geq J^Q(Q_t+1,P_t,Q_t+1),$$
which means that $S_t^*(Q_t+1,P_t) = Q_t+1$. Hence, once $S_t^*(Q_t,P_t) = Q_t$ holds for some queue length $Q_t$, it also holds for larger queue length, which means that
\[
  Q^{*}(P_{t}) = \sup\{Q\geq0:S_t^{*}(Q,P_{t}) = 0\}.
\]
This establishes the existence and uniqueness of the threshold $Q^{*}(P_{t})$. \qed

\end{proof}

\medskip
\noindent\textbf{\sffamily Proof of Property \ref{monotonic-posting}.}
\begin{proof}
Fix two fees \(0<f_1<f_2\le \lambda_0/k\) and denote
\[
\lambda_1 \;=\;\lambda(f_1)=\lambda_0-kf_1,
\qquad
\lambda_2 \;=\;\lambda(f_2)=\lambda_0-kf_2,
\qquad
p \;=\;\frac{\lambda_2}{\lambda_1}\in(0,1).
\]

Let \(\{A_t^{(1)}\}_{t\ge 0}\) be i.i.d.\ Poisson\((\lambda_1)\) random variables.  
For every job arriving in period \(t\) draw an independent Bernoulli\((p)\) mark;  
let \(A_t^{(2)}\) be the number of marked jobs in period \(t\).
Standard thinning shows \(A_t^{(2)}\stackrel{\text{i.i.d.}}{\sim}\operatorname{Pois}(\lambda_2)\) and
\(A_t^{(2)}\le A_t^{(1)}\) almost surely for every \(t\).

Then we consider the L2 system with fee \(f_1\) under its optimal posting policy
\(\{S_t^{(1)}\}_{t\ge 0}\).  
Write \(Q_t^{(1)}\) for the queue length and recall
\[
Q_{t+1}^{(1)} \;=\; Q_t^{(1)}-S_t^{(1)}+A_t^{(1)} .
\]
Within the same probability space, consider the sub system that tracks only the
marked jobs. Its queue length evolves as
\[
\widetilde Q_{t+1} \;=\; \widetilde Q_t - \widetilde S_t + A_t^{(2)},
\]
where $\widetilde S_t$ is the number of marked jobs among $S_t^{(1)}$ posted at time $t$. Since each posted job is marked independently with probability \(p\),
\(\{\widetilde S_t\}\) is feasible for the system fed by arrivals
\(\{A_t^{(2)}\}\), which can also be considered as the system with the fee $f_2$.  Denote the resulting cost stream by
\(
\widetilde c_t := c\bigl(\widetilde S_t,\widetilde Q_t,P_t;f_2\bigr)
\).

For every \(t\), we have \(Q_t^{(1)}-S_t^{(1)} \ge \widetilde Q_t-\widetilde S_t\) and
\(S_t^{(1)} \ge \widetilde S_t\), by the definition of the cost function $c(S,Q,P;f)$, we always have
\[
c\bigl(S_t^{(1)},Q_t^{(1)},P_t;f_1\bigr)
\;\ge \;
c\bigl(\widetilde S_t,\widetilde Q_t,P_t;f_2\bigr).
\]
Since \(p<1\), the probability that there is at least one unmarked job is positive, which means that
\[
\mathbb P\bigl[c\bigl(S_t^{(1)},Q_t^{(1)},P_t;f_1\bigr) - c\bigl(\widetilde S_t,\widetilde Q_t,P_t;f_2\bigr) \ge \min\{a,b_1P_t\}]>0.
\]
This means
\[
\mathbb E\bigl[c(S_t^{(1)},Q_t^{(1)},P_t;f_1)\bigr]
\;>\;
\mathbb E\bigl[c(\widetilde S_t,\widetilde Q_t,P_t;f_2)\bigr]
\quad\text{for every }t.
\]

Let \(S_t^{(2)}\) be the optimal posting policy for fee \(f_2\). Since \(\{\widetilde S_t\}\) is feasible for the \(f_2\) system, optimality yields
\[
\mathbb E[c(S_t^{(2)},Q_t^{(2)},P_t;f_2)]
\;\le\;
\mathbb E[c(\widetilde S_t,\widetilde Q_t,P_t;f_2)].
\]
Thus, we can get
\[
\mathbb E\bigl[c(S_t^{(1)},Q_t^{(1)},P_t;f_1)\bigr]
\;>\;
E[c(\widetilde S_t,\widetilde Q_t,P_t;f_2)]
\;\ge\;
\mathbb E\bigl[c(S_t^{(2)},Q_t^{(2)},P_t;f_2)\bigr].
\]  
This establishes the monotonicity of the expected cost function. \qed
\end{proof}

\medskip
\noindent\textbf{\sffamily Proof of Property \ref{cost-bound}.} 
\begin{proof}
The expected cost with respect to the optimal posting strategy will not be greater than the expected cost with respect to a posting strategy that we post all the transactions at each block. Thus, we have
\begin{align*}
    \mathbb{E}\left[ c(S_t^*,Q_t,P_t;f) \right] &\leq \mathbb{E}\left[ c(Q_t,Q_t,P_t;f) \right] \\
    &= \mathbb{E}\left[ \left(b_{0} + b_{1}Q_{t}\right) P_{t}\mathbf{1}{\left\{ Q_{t}>0\right\} } \right] \\
    &\leq \mathbb{E}\left[ \left(b_{0} + b_{1}Q_{t}\right) P_{t} \right] \\
    &= \left(b_{0} + b_{1} \lambda(f) \right) \mu \\
    &= \left(b_{0} + b_{1} \lambda_0-b_1 kf \right) \mu.
\end{align*}
\qed
\end{proof}

\medskip
\noindent\textbf{\sffamily Proof of Theorem \ref{existence-bb}.} 
\begin{proof}
Suppose $C(f) = \left(b_{0} + b_{1} \lambda_0-b_1 kf \right) \mu$, which is one of the upper bound of $\mathbb{E}\left[ c(S_t^*,Q_t,P_t;f) \right]$ by \textit{Property} \ref{cost-bound}. We first consider the sufficient condition that there is at least one root for $\lambda(f) f = C(f)$. Since 
$$C(\lambda_0/k+b_0/(b_1k))=0$$ and $C(f)$ is monotonically decreasing, the sufficient condition that $\lambda(f) f = C(f)$ has at least one root can be 
$$\max \left\{\lambda(f)f\right\} \geq C(0).$$
This is equivalent to 
$$\frac{\lambda_0^2}{4k} \geq (b_0+b_1 \lambda_0) \mu.$$

If $\lambda_0^2/(4k) \geq (b_0+b_1 \lambda_0) \mu,$ we know that there exists $\hat{f} \in [0,\lambda_0/(2k)]$ such that $\lambda(\hat{f})\hat{f} = C(\hat{f})$. When $f = 0$, we have $$\lambda(f)f=0<\mathbb{E}\left[ c(S_t^*,Q_t,P_t;0) \right].$$
When $f=\hat{f}$, we have
$$\lambda(\hat{f})\hat{f} = C(\hat{f}) \geq \mathbb{E}\left[ c(S_t^*,Q_t,P_t;\hat{f})\right].$$
Thus, there exists a root $f^* \in [0,\hat{f}]$, such that
$$\lambda(f^*)f^* = \mathbb{E}\left[ c(S_t^*,Q_t,P_t;f^*)\right].$$

The uniqueness of the root over the interval \(f \in \left[0, \lambda_0/(2k)\right]\) follows from the fact that the left-hand side is strictly monotonically increasing in \(f\). This is because \(\mathbb{E}[A(s; f)] = \lambda_0 - kf\) is linear and decreasing, making \(\mathbb{E}[A(s; f) f] = f(\lambda_0 - kf)\) strictly increasing on \([0, \lambda_0/(2k)]\), while \(\mathbb{E}[c(S_s^*, Q_s, P_s; f)]\) is strictly decreasing in \(f\) by Property~\ref{monotonic-posting}. \qed

\end{proof}

\medskip
\noindent\textbf{\sffamily Proof of Proposition \ref{renewal-thm}.} 

\begin{proof}
We design $A_s^{(i)}(g) = A(l;g)$ and $c^{(i)}(S_s^*,Q_s,P_s;g) = c(S_l^*,Q_l,P_l;g)$, where 
$$l = \sum_{j=1}^{i-1} \tau_j(g) + s.$$
By ergodic theorem and the strong law of large number, we have 
\begin{eqnarray*}
    \mathbb{E}[X(t;g)] &=& \lim_{N\rightarrow \infty} \frac{1}{N} \sum_{i=1}^N \left[ \sum_{s=1}^{\tau_i(g)} \left (A_s^{(i)}(g) g - c^{(i)}(S_s^*,Q_s,P_s;g) \right)  \right]\\
    &=& \lim_{N\rightarrow \infty} \frac{\sum_{i=1}^N \tau_i(g) }{N} \frac{1}{\sum_{i=1}^N \tau_i(g)} \sum_{i=1}^N \sum_{s=1}^{\tau_i(g)} \left( A_s^{(i)}(g) g - c^{(i)}(S_s^*,Q_s,P_s;g) \right)  \\
    &=& \lim_{N\rightarrow \infty} \frac{\sum_{i=1}^N \tau_i(g) }{N} \\
    &&\cdot \lim_{N\rightarrow \infty} \frac{1}{\sum_{i=1}^N \tau_i(g)} \sum_{i=1}^N \sum_{s=1}^{\tau_i(g)} \left( A_s^{(i)}(g) g - c^{(i)}(S_s^*,Q_s,P_s;g) \right)  \\
    &=& \mathbb{E}[\tau(g)]\mathbb{E}[A(g) g -c(S_s^*,Q_s,P_s;g)].
\end{eqnarray*}
By the definition of $f^*$, we will have $\mathbb{E}[X(s;f^*)]=0$ after plugging in $g=f^*$.

Similarly, we can also get the result for $\mathbb{E}[Y(t;g)]$. \qed
\end{proof}

\medskip
\noindent\textbf{\sffamily Proof of Theorem \ref{fee-convergence-thm}.} 
\begin{proof}
First, we show that $i(t) \rightarrow \infty$ and $j(t) \rightarrow \infty$ almost surely. If $i(t) < \infty$, there must $\exists M \geq 0$, such that $\forall t>M$, we have $\delta_t = 0$.
Consider
\begin{align*}
    P(i(t)<\infty) &\leq \sum_{l=0}^{\infty} P(M=l) P(\forall t>M, \delta_t = 0 | M=l) \\
    &\leq \sum_{l=0}^{\infty}  P(\forall t>M, \delta_t = 0 | M=l) \\
    &= \sum_{l=0}^{\infty}  P(\forall t>l, \delta_t = 0 | M=l).
\end{align*}
For all $l \geq 0$, we have
\begin{eqnarray*}
    && P(\forall t>l, \delta_t = 0 | M=l) \\
    &=& \prod_{t=l+1}^\infty P(\delta_t = 0 | M=l, \delta_{t-1} = \cdots = \delta_{l+1} = 0) \\
    &\leq& \prod_{t=l+2}^\infty P(\delta_t = 0 | M=l, \delta_{t-1} = \cdots = \delta_{l+1} = 0) \\
    &\leq& \prod_{t=l+2}^\infty \max_{p\in [0, \lambda_0/(2k)]} P(\delta_t = 0 | M=l, \delta_{t-1} = \cdots = \delta_{l+1} = 0, p_{j(t-1)}=p) \\
    &=& \prod_{t=l+2}^\infty \max_{p\in [0, \lambda_0/(2k)]} P(\delta_t = 0 | \delta_{t-1} = 0, p_{j(t-1)}=p) \\
    &=& \prod_{t=l+2}^\infty \max_{p\in [0, \lambda_0/(2k)]} P\left(\sum_{s=\kappa (t-1)+1}^{s=\kappa t} X(s; p) \geq 0\right) \\
    &\leq& \prod_{t=l+2}^\infty (1-\rho) \\
    &=& 0,
\end{eqnarray*}
where 
$$\rho = 1- \max_{p\in [0, \lambda_0/(2k)]} P\left(\sum_{s=\kappa (t-1)+1}^{s=\kappa t} X(s; p) \geq 0\right),$$
and it is easy to show that $\rho > 0$ by using the compact property of interval $[0,\lambda_0/(2k)]$. Hence, we have
\begin{align*}
    P(i(t)<\infty) &\leq \sum_{l=0}^{\infty}  P(\forall t>l, \delta_t = 0 | M=l) \\
    &= \sum_{l=0}^{\infty} 0 \\
    &= 0,
\end{align*}
which means that $P(i(t)=\infty) = 1$. Similarly, we have $P(j(t)=\infty) = 1$. Thus, $i(t) \rightarrow \infty$ and $j(t) \rightarrow \infty$ almost surely.

\medskip
Second, we show that $f_t \rightarrow f^*$ and $p_t \rightarrow p^*$ almost surely. Define the update stopping times  
\[
T_n^{f}:=\inf\{t\ge0:\delta_t=1,\;i(t)=n\},
\]
and set \(f_n:=g_{T_n^{f}}\). For \(n\ge0\), suppose
\[
\widehat X_n:=\frac1\kappa\sum_{s=1}^{\kappa}X\bigl(T_n^{f}+s;f_n\bigr).
\]
Since the queue empties when a budget‑balance update is triggered, by the strong Markov property
\[
\mathbb E\!\bigl[
  \varphi\!\bigl(X(T_n^{f}+1;f_n),\cdots,X(T_n^{f}+\kappa;f_n)\bigr)
  \,\big|\,\mathcal F_{T_n^{f}}
\bigr]
=\mathbb E\!\bigl[
     \varphi\!\bigl(X(1;f_n),\dots,X(\kappa;f_n)\bigr)
  \bigr]\quad\text{a.s.},
\]
for any bounded measurable \(\varphi:\mathbb R^{\kappa}\!\to\!\mathbb R\).
With step sizes \(a_n:=a/(n+1)\), we have the recursion
\[
f_{n+1}
=\Pi_{[0,\lambda_0/(2k)]}\bigl(f_n-a_n\widehat X_{n+1}\bigr).
\]
Since we have already proved that $i(t) \rightarrow \infty$ almost surely, by the strong Markov property and the recursion, we can derive $f_n \rightarrow f^*$ almost surely straightforwardly by using the Theorem 2.1 at Page 127 of \cite{harold1997stochastic} with Theorem \ref{existence-bb} and Theorem \ref{existence-cc}.
Similarly, we can also show that $p_t \rightarrow p^*$ almost surely.

\medskip
Third, we show that the long-run average proportions $i(t)/t \rightarrow \pi_f$ and $j(t)/t  \rightarrow \pi_p$ almost surely. We have already established that the update rules for the fees satisfy
\[
f_t\longrightarrow f^{*},\qquad 
p_t\longrightarrow p^{*},\qquad\text{a.s.}
\]
and that, when the fees are frozen at $(f^{*},p^{*})$, the indicator process $\delta_t\in\{0,1\}$ evolves as an irreducible, aperiodic two state Markov chain with transition matrix
\[
P=\begin{pmatrix}
P_{00} & P_{01}\\
P_{10} & P_{11}
\end{pmatrix},\qquad 
P_{uv}\in(0,1),\;u,v\in\{0,1\}.
\]
Then we show that $\forall \epsilon>0$, there exists \(\alpha>0\) such that for any matrix \(P'=(P'_{uv})\) satisfying
\[
\max_{u,v\in\{0,1\}} \bigl|P_{uv}-P'_{uv}\bigr| < \alpha,
\]
the corresponding stationary distribution \(\pi(P')=(\pi_f',\pi_p')\) obeys
\[
\bigl|\pi_f - \pi_f'\bigr| < \epsilon
\quad\text{and}\quad
\bigl|\pi_p - \pi_p'\bigr| < \epsilon.
\]
This follows from the explicit formulas
\[
\pi_f = \frac{P_{10}}{P_{10}+P_{01}}
\quad\text{and}\quad
\pi_p = \frac{P_{01}}{P_{10}+P_{01}},
\]
which are uniformly continuous functions. Hence, there exists \(\alpha>0\), such that for all \(\max_{u,v}|P_{uv}-P'_{uv}|<\alpha\), we have $\bigl|\pi_f - \pi_f'\bigr| < \epsilon$ and $\bigl|\pi_p - \pi_p'\bigr|  = \bigl|\pi_f - \pi_f'\bigr| < \epsilon$.  
Since $f_t \rightarrow f^*$ and $p_t \rightarrow p^*$ almost surely, and $i(t),j(t) \rightarrow \infty$ almost surely, there must exist a random variable $\tau<\infty$ such that for all $t>\tau$:
\begin{enumerate}
    \item if $\delta_t=1$, $P(\delta_{t+1}=1 | \delta_t=1, \mathcal{F}_t)=P(\delta_{t+1}=1 | \delta_t=1, g_t)\geq P_{11}-\alpha$; 
    \item if $\delta_t=0$, $P(\delta_{t+1}=1 | \delta_t=0, \mathcal{F}_t)=P(\delta_{t+1}=1 | \delta_t=0, g_t)\geq P_{01}-\alpha$
\end{enumerate}
with probability $1$. This is because the sequence $\{f_t\}_{t>\tau}$ and $\{p_t\}_{t>\tau}$ will fall into very close neighborhoods of $f^*$ and $p^*$ respectively, by the definition of almost surely convergence. Let's define
\[
T_l=\inf\{t: i(t)=\ell\},
\quad \text{and} \quad
t_l^{(f)}=T_{l+1}-T_l.
\]
Hence, in order to show that $i(t)/t \rightarrow \pi_f$, we only need to show that
$$\frac{\sum_{l=1}^n t_l^{(f)}}{n} \rightarrow \frac{1}{\pi_f}.$$
We introduce a double array of i.i.d.\ random variables 
\[
\left\{ \xi_{l,s} \sim \mathrm{Unif}[0,1] \right\}_{l,s \ge 1},
\]
where each \(\xi_{l,s}\) is independently drawn from the uniform distribution on \([0,1]\). 
For some time \( t \) such that \(\delta_t = 1\), suppose that \( l = i(t) \). The subsequent evolution of the chain is determined by the entries in the \( l \)-th row of the double array \(\{\xi_{l,s}\}_{s \geq 1}\):
\begin{enumerate}
    \item At step \( t + 1 \), we set
    \[
    \delta_{t+1} =
    \begin{cases}
        1, & \text{if } \xi_{l,1} \leq P(\delta_{t+1}=1 | \delta_t=1, g_t), \\[6pt]
        0, & \text{otherwise}.
    \end{cases}
    \]

    \item If \(\delta_{t+1}=1\), the procedure stops for this row, and we move immediately to the next row (\(\{\xi_{l+1,s}\}_{s \geq 1}\)).

    \item If \(\delta_{t+1}=0\), we remain on row \( l \). At step \( t + 2 \), we then use the next entry in this row, setting
    \[
    \delta_{t+2} =
    \begin{cases}
        1, & \text{if } \xi_{l,2} \leq P(\delta_{t+2}=1 | \delta_{t+1}=0, g_{t+1}), \\
        0, & \text{otherwise}.
    \end{cases}
    \]

    \item More generally, for each \( s \geq 2 \), if we have not yet returned to state \( 1 \) by step \( t + s - 1 \), we continue using the subsequent entries of row \( l \) as follows:
    \[
    \delta_{t+s} =
    \begin{cases}
        1, & \text{if } \xi_{l,s} \leq P(\delta_{t+s}=1 | \delta_{t+s-1}=0, g_{t+s-1}), \\
        0, & \text{otherwise}.
    \end{cases}
    \]

    \item This process continues until the chain returns to the state \( 1 \). After returning to state \( 1 \), we increment the row index from \( l \) to \( l+1 \) and repeat the entire process with the new row.
\end{enumerate}
Thus, the random time spent on row \( l \) is precisely the duration until the chain first returns to state \( 1 \), formally defined by
\[
t_l^{(f)} = \inf\{s \geq 1 : \delta_{T_l + s} = 1\}, \quad \text{where} \quad T_l = \inf\{t : i(t) = l\}.
\]
Now, let's consider
\[
\hat P_{11}=P_{11}-\alpha,\quad
\hat P_{01}=P_{01}-\alpha,\quad
\hat P_{10}=1-\hat P_{11},\quad
\hat P_{00}=1-\hat P_{01},
\]
that defines a valid transition matrix \(\hat P\). By construction, for all $t>\tau$, we have the component-wise inequalities
$$
P(\delta_{t+1}=1 | \delta_t=1, g_t) \geq \hat{P}_{11}, \quad P(\delta_{t+1}=1 | \delta_t=0, g_t)\geq \hat{P}_{01}.
$$
We introduce another auxiliary sequence $\{\beta_t\}_{t\geq 0}$. We define the evolution of $\{\beta_t\}$ using successive rows of the array $\{\xi_{l,s}\}_{s\geq1}$. Specifically, we set $\beta_0=1$ initially and suppose we are currently at row $l$. Then:
\begin{enumerate}
\item At step $t+1$, set
$$
\beta_{t+1} =
    \begin{cases}
        1, & \text{if } \xi_{l,1}\leq \hat{P}_{11},\\[6pt]
        0, & \text{otherwise}.
    \end{cases}
$$
\item If \(\beta_{t+1}=1\), stop using row \(l\) and increment the row index to \(l+1\). If \(\beta_{t+1}=0\), continue using the subsequent entry \(\xi_{l,2}\) at step \(t+2\).

\item For each subsequent step \(t+s\) with \(s\geq2\), if \(\beta_{t+s-1}=0\), define
\[
\beta_{t+s}=
\begin{cases}
    1, & \text{if } \xi_{l,s}\leq \hat{P}_{01},\\[6pt]
    0, & \text{otherwise}.
\end{cases}
\]

\item Continue this process until \(\beta\) returns to state \(1\). After returning to state \(1\), increment the row index to \(l+1\) and repeat the procedure.
\end{enumerate}
We also define the return time as
\[
\hat{t}_l = \inf\{s\geq 1:\beta_{t+s}=1\}.
\]
Since $\forall t>\tau$,
$$
P(\delta_{t+1}=1 | \delta_t=1, g_t) \geq \hat{P}_{11}, \quad P(\delta_{t+1}=1 | \delta_t=0, g_t)\geq \hat{P}_{01},
$$
we have the inequality of the corresponding return times:
$$
\hat{t}_l \leq t^{(f)}_l,\quad \text{for all } l>i(\tau).
$$
The sequence $\{\hat{t}_l\}$ corresponds exactly to the return times of a stationary Markov chain with transition matrix $\hat{P}$. Thus, by the ergodic theorem for irreducible and aperiodic Markov chains,
$$
\frac{\sum_{l=1}^n\hat{t}_l}{n}\rightarrow \frac{1}{\hat{\pi}_f},
$$
almost surely.
\(\max_{u,v}|P_{uv}-\hat P_{uv}|<\alpha\), we have $\bigl|\pi_f - \hat \pi_f\bigr| < \epsilon$.
Thus,
$$
\liminf_{n\rightarrow\infty}\frac{\sum_{l=1}^nt^{(f)}_l}{n}
= \liminf_{n\rightarrow\infty}\frac{\sum_{l=i(\tau)+1}^nt^{(f)}_l}{n}
\geq\lim_{n\rightarrow\infty}\frac{\sum_{l=i(\tau)+1}^n\hat{t}_l}{n}>\frac{1}{\pi_f+\epsilon}.
$$
Similarly, we can also get an upper bound:
$$
\limsup_{n\rightarrow\infty}\frac{\sum_{l=1}^nt^{(f)}_l}{n} < \frac{1}{\pi_f-\epsilon}.
$$
Since $\epsilon>0$ is arbitrary, letting $\epsilon\rightarrow0$, we conclude
$$
\frac{\sum_{l=1}^nt^{(f)}_l}{n}\rightarrow \frac{1}{\pi_f},
$$
almost surely, which means that $i(t)/t \rightarrow \pi_f$ almost surely. Similarly, we can also get $j(t)/t \rightarrow \pi_p$ almost surely.

\medskip
Fourth, we show the convergence property of the stationary distribution $(\pi_f,\pi_p)$ with respect to the transition matrix $P$ as $\kappa \rightarrow \infty$. Without loss of generality, we assume \(f^{*}>p^{*}\). Define
\[
X^{\kappa}_t = \sum_{s=\kappa t+1}^{\kappa(t+1)}\bigl(A(s;p^{*})\,p^{*}-c(S_s^{*},Q_s,P_s;p^{*})\bigr)
\]
and
\[
Y^{\kappa}_t = \sum_{s=\kappa t+1}^{\kappa(t+1)}(\bar{\lambda}-A(s;f^{*})).
\]
Suppose
\[
\mu_X=\mathbb E[A(s;p^{*})\,p^{*}-c(S_s^{*},Q_s,P_s;p^{*})],
\quad
\mu_Y=\mathbb E[\bar\lambda-A(s;f^{*})]
\]
and
\[
\sigma_X^{2}=\operatorname{Var}[A(s;p^{*})\,p^{*}-c(S_s^{*},Q_s,P_s;p^{*})],\quad
\sigma_Y^{2}=\operatorname{Var}[\bar\lambda-A(s;f^{*})].
\]
Since \(f^{*}>p^{*}\), we have \(\lambda(f^{*})<\bar\lambda\), which means \(\mu_Y>0\).
Similarly, we have \(\mu_X<0\).
By the Central Limit Theorem, the transition probabilities of the chain for a given \(\kappa\) satisfy
$$P_{10} = P\left(Y^\kappa_n < 0\right) \rightarrow 0. $$
Similarly, we also have $P_{00} \rightarrow 0.$ Thus, the stationary distribution
\[
(\pi_f,\pi_p)\rightarrow(1,0)
\]
as $\kappa\rightarrow\infty$.
Next, we characterize the rate of convergence using the Berry–Esseen Theorem. Define 
\[
T_Y = \frac{Y_{\kappa}-\kappa\mu_Y}{\sigma_Y\sqrt{\kappa}}.
\]
By Berry–Esseen Theorem, there exists a universal constant \(C>0\) such that for all \(y\),
\[
|P(T_Y\le y)-\Phi(y)|\le\frac{C\rho_Y}{\sigma_Y^3\sqrt{\kappa}},
\]
where
\[\rho_Y=\mathbb{E}[|\bar\lambda - A(s;f^*)-\mu_Y|^3].\] 
Since \(\mu_Y>0\), we set \(y=-\mu_Y\sqrt{\kappa}/\sigma_Y\) and obtain
\[
P(Y_{\kappa}<0)=P(T_Y\leq y)\leq \frac{C_Y}{\sqrt{\kappa}},
\]
for some constant \(C_Y>0\). Similarly, we also have
\[
P(X_{\kappa}\geq0)\leq\frac{C_X}{\sqrt{\kappa}},
\]
for some constant \(C_X>0\). Thus, we have
\[
P_{10}=O(\kappa^{-1/2})
\quad \text{and} \quad
P_{00}=O(\kappa^{-1/2}).
\]
Expressing the stationary distribution explicitly as
\[
\pi_f = \frac{P_{01}}{P_{01}+P_{10}},
\]
we can conclude taht the convergence rate of the stationary distribution is
\[
|\pi_f - 1|=|\pi_p - 0|=O(\kappa^{-1/2}).
\]
Similarly, the case \(f^{*}<p^{*}\) follows from a symmetric argument, which gives
\[
(\pi_f,\pi_p)\rightarrow(0,1),\quad |\pi_f - 0|=|\pi_p - 1|=O(\kappa^{-1/2}).
\] \qed

\end{proof}

\newpage

\section{Figures}
\subsection{Decreasing step size and i.i.d. L1 fee case} \label{fig-de-iid}

\begin{figure}[H]
    \centering
    \begin{minipage}{0.49\textwidth}
        \centering
        \includegraphics[width=\textwidth]{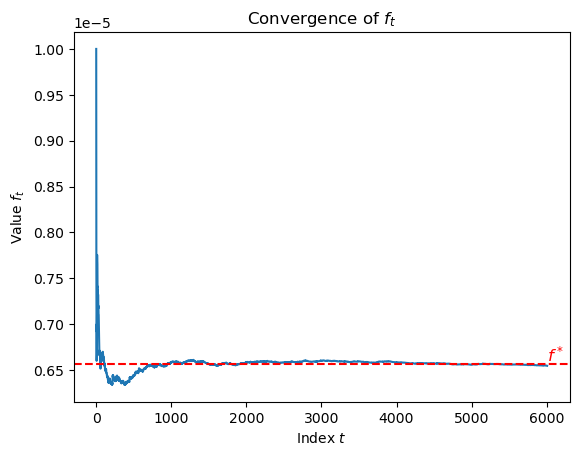}
    \end{minipage}\hfill
    \begin{minipage}{0.49\textwidth}
        \centering
        \includegraphics[width=\textwidth]{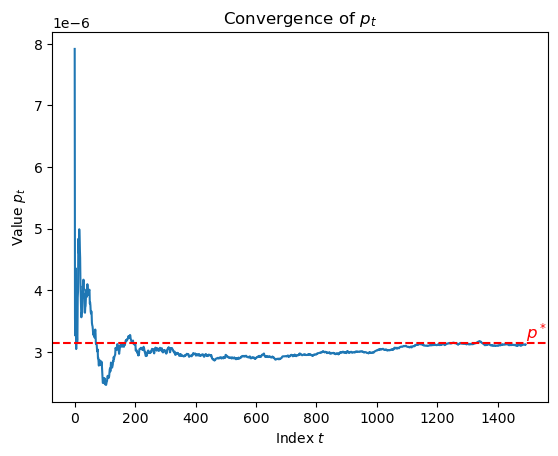}
    \end{minipage}

    \begin{minipage}{0.49\textwidth}
        \centering
        \includegraphics[width=\textwidth]{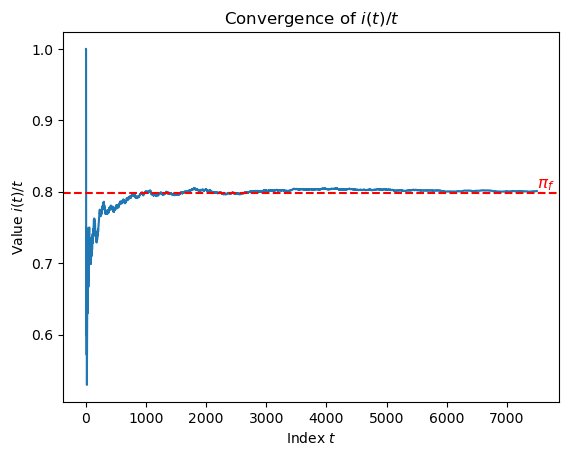}
    \end{minipage}\hfill
    \begin{minipage}{0.49\textwidth}
        \centering
        \includegraphics[width=\textwidth]{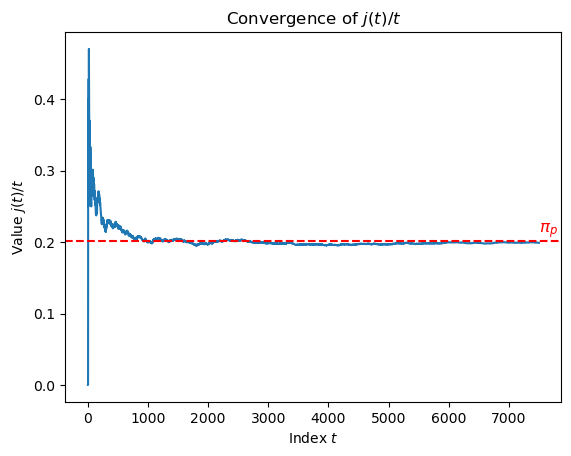}
    \end{minipage}
    \caption{
    Illustration of the fee update mechanism under decreasing step size and 
    i.i.d. L1 fees. The top row displays the evolution of the posting fees over time, with red dashed lines marking the theoretical optimal levels \(f^*\) and \(p^*\). In the bottom row, the plots show the proportion of times each type of fee update mechanism (budget balance vs.\ congestion control) is selected. All four trajectories exhibit convergence behavior in line with the theoretical results.}
\end{figure}

\newpage

\subsection{Decreasing step size and non-i.i.d. L1 fee case} \label{fig-de-non}

\begin{figure}[H]
    \centering
    \begin{minipage}{0.49\textwidth}
        \centering
        \includegraphics[width=\textwidth]{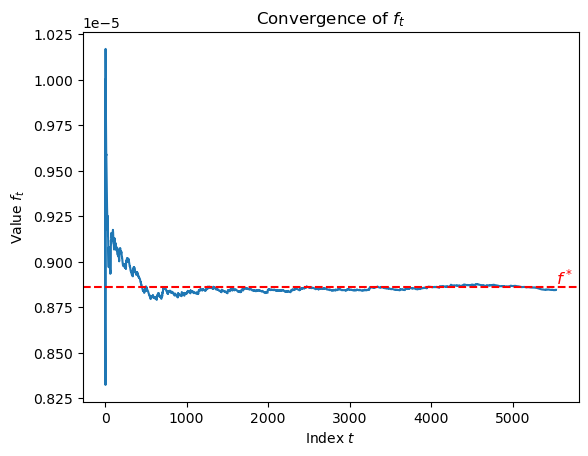}
    \end{minipage}\hfill
    \begin{minipage}{0.49\textwidth}
        \centering
        \includegraphics[width=\textwidth]{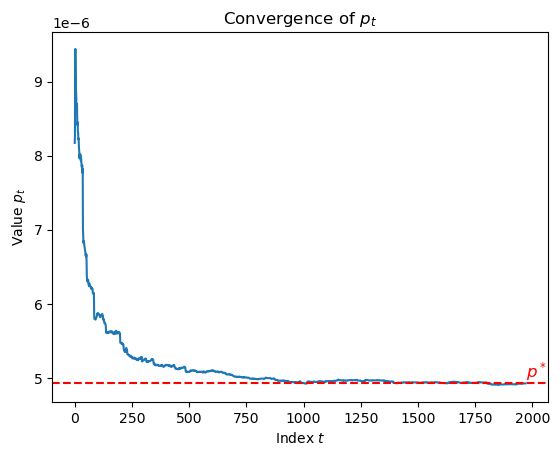}
    \end{minipage}
    
    \begin{minipage}{0.49\textwidth}
        \centering
        \includegraphics[width=\textwidth]{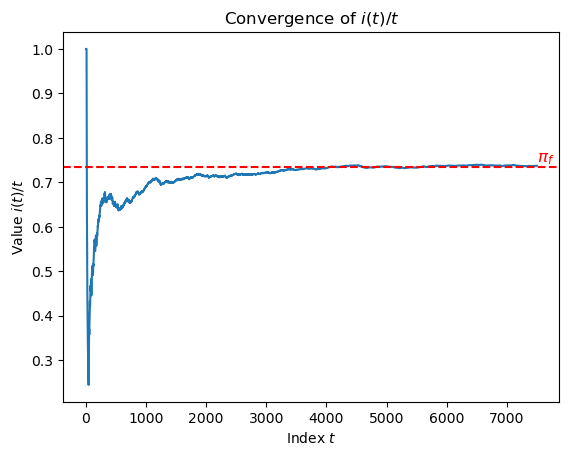}
    \end{minipage}\hfill
    \begin{minipage}{0.49\textwidth}
        \centering
        \includegraphics[width=\textwidth]{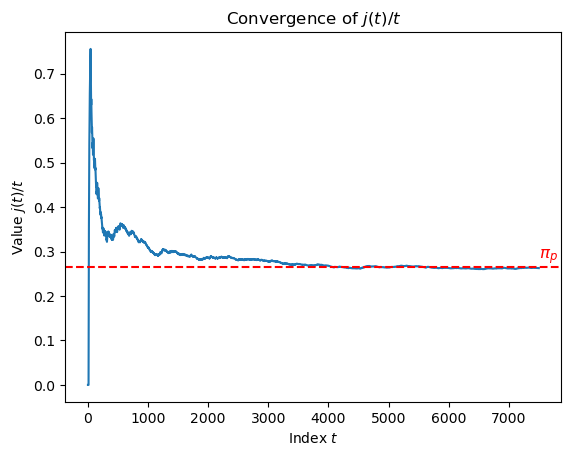}
    \end{minipage}
    \caption{%
    Illustration of the fee update mechanism under decreasing step size and 
    non-i.i.d. L1 fees. The top row displays the evolution of the posting fees over time, with red dashed lines marking the theoretical optimal levels \(f^*\) and \(p^*\). In the bottom row, the plots show the proportion of times each type of fee update mechanism (budget balance vs.\ congestion control) is selected. All four trajectories exhibit convergence behavior in line with the theoretical results.}

\end{figure}

\newpage

\subsection{Constant step size and non-i.i.d. L1 fee case.} \label{fig-con-non}

\begin{figure}[H]
    \centering
    \begin{minipage}{0.49\textwidth}
        \centering
        \includegraphics[width=\textwidth]{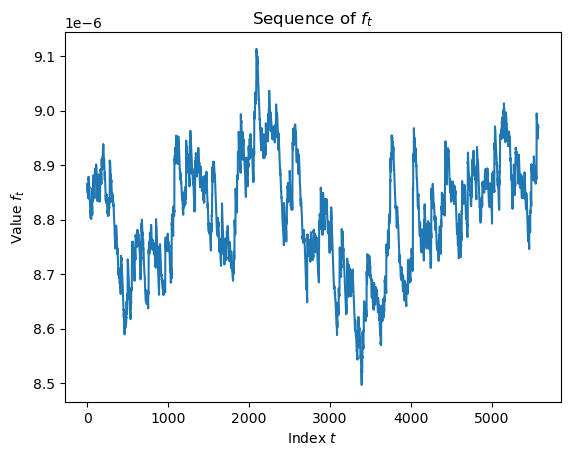}
    \end{minipage}\hfill
    \begin{minipage}{0.49\textwidth}
        \centering
        \includegraphics[width=\textwidth]{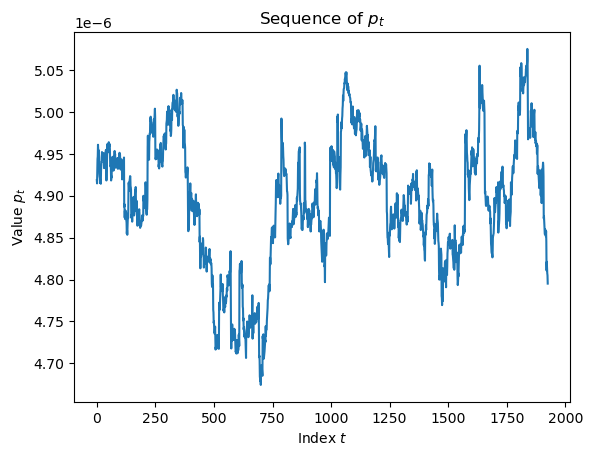}
    \end{minipage}

    \begin{minipage}{0.49\textwidth}
        \centering
        \includegraphics[width=\textwidth]{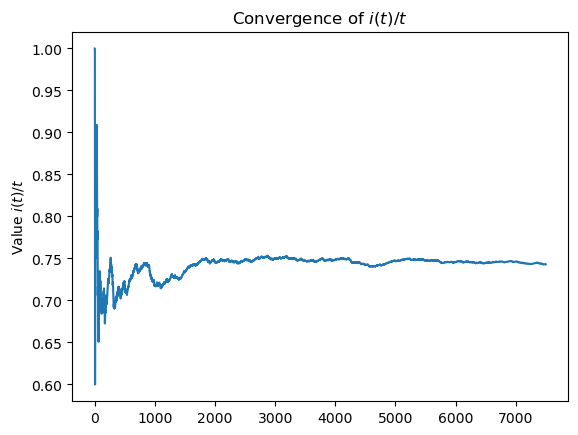}
    \end{minipage}\hfill
    \begin{minipage}{0.49\textwidth}
        \centering
        \includegraphics[width=\textwidth]{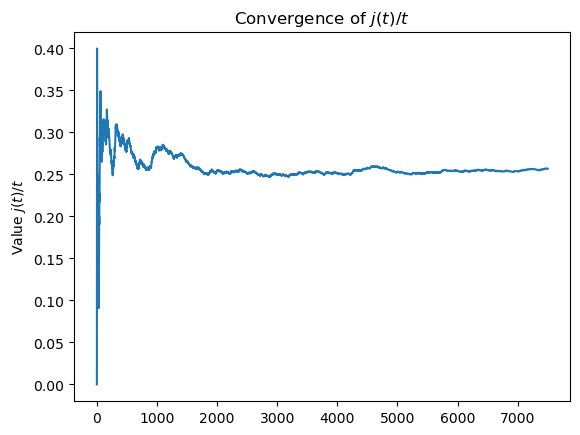}
    \end{minipage}

    \begin{minipage}{0.49\textwidth}
        \centering
        \includegraphics[width=\textwidth]{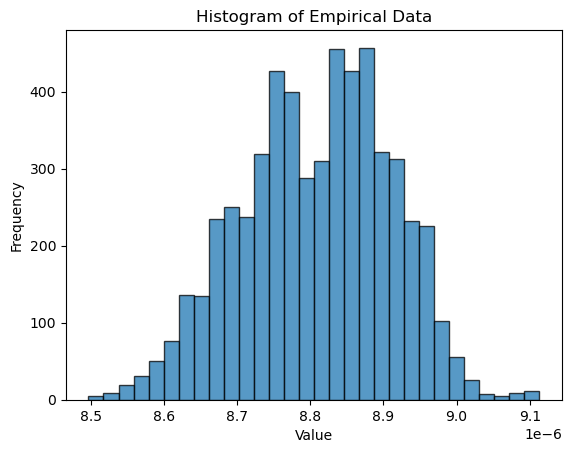}
    \end{minipage}\hfill
    \begin{minipage}{0.49\textwidth}
        \centering
        \includegraphics[width=\textwidth]{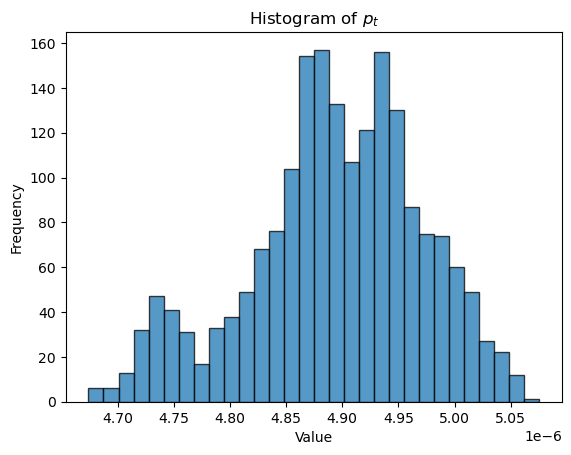}
    \end{minipage}
    \caption{%
    Illustration of the fee update mechanism under constant step size and non‐i.i.d. L1 fees. The top row shows time‐series of the fee updates \(f_t\) and \(p_t\) under a mean‐reverting L1 fee process. In the middle row, the plots track the long‐run fraction of times each update rule is chosen, indicating how often the budget‐balance versus congestion‐control mechanism is active. The bottom row presents empirical histograms of \(f_t\) and \(p_t\). Although no formal convergence theorem applies in this setting, the simulation demonstrates that both fees remain near the targets, and that the selection frequencies also stabilize, highlighting the mechanism's robustness under realistic market dynamics.}
\end{figure}

\end{document}